\documentclass[10pt]{amsart}
\usepackage{amsmath,amsthm,amsfonts}
\usepackage{graphicx}
\usepackage[usenames]{color}
\usepackage{hyperref}

\newtheorem{theorem}{Theorem}[section]
\newtheorem{lemma}[theorem]{Lemma}
\theoremstyle{definition}

\newtheorem{corollary}[theorem]{Corollary}

\newtheorem{definition}[theorem]{Definition}
\newtheorem{example}[theorem]{Example}
\newtheorem{problem}[theorem]{Problem}
\newtheorem{remark}[theorem]{Remark}
\newtheorem{proposition}[theorem]{Proposition}

\newcommand{\gp}[1]{{\left\langle #1 \right\rangle}}

\newcommand{\ygc}[1]{{\red [YG comment]}}
\newcommand{\rb}[1]{{\left( #1 \right)}}

\newcommand{\cclass}[1]{\mathbf{#1}}

\def\ovn{{\overline n}}

\def\CC{{\mathcal C}}
\def\CS{{\mathcal S}}

\def\P{{\mathbf{P}}}
\def\NP{{\mathbf{NP}}}
\def\GP{{\mathbf{GP}}}
\def\SGP{{\mathbf{SGP}}}
\def\DistNP{{\mathbf{DistNP}}}
\DeclareMathOperator{\supp}{{supp}}

\def\MN{{\mathbb{N}}}

\title{Generic case completeness}

\author[A. Miasnikov]{Alexei Miasnikov}

\author[A. Ushakov]{Alexander Ushakov}
\address{Department of Mathematics, Stevens Institute of Technology, Hoboken, NJ, USA}
\email{amiasnik,aushakov@stevens.edu}
\thanks{The work was partially supported by NSF grant DMS-1318716.}

\begin{document}
\maketitle

\begin{abstract}
In this note we introduce a notion of a generically (strongly generically) $\NP$-complete
problem and show that the randomized bounded version of the halting
problem is strongly generically $\NP$-complete.

\noindent
\textbf{Keywords.}
Generic-case complexity, completeness, randomized problems, bounded halting problem.

\noindent
\textbf{2010 Mathematics Subject Classification.} 68Q17.
\end{abstract}

\tableofcontents

\section{Introduction}
We introduce and study  problems that are generically in $\NP$, i.e.,
decision problems that have partial  errorless nondeterministic
decision algorithms that solve the problem in polynomial time on ``most'' inputs. We define appropriate  reductions in this class and show that there are some complete problems there, called {\em strongly generically $\NP$-complete} problems. In particular, the randomized bounded version of the halting problem is one of them.

Rigorous formulation of notions of generic algorithms and generic complexity appeared first in group theory \cite{KMSS1,KMSS2} as a response to several  challenges that algorithmic algebra faced at that time. First, it was well understood that many hard, even undecidable, algorithmic problems in groups can be easily solved on most instances (see \cite{KMSS1,KMSS2,GMMU:2007,MSU_book:2011} for a thorough discussion).
Second, the study of  random objects and generic properties of objects has become the mainstream of geometric group theory, following the lead of  graph and number theory (see \cite{Gromov_hyperbolic, Gr, Gromov:2003,Ol,Arzhantseva:1998,Champetier:1995,BMR}). It turned out that  ``random'', ``typical'' objects have many nice properties that lead to simple and efficient algorithms. However a rigorous formalization of this approach was lagging behind. Algorithmic algebra  was still focusing mostly on the worst-case complexity with minor inroads into average case complexity.
Third, with the rapid development of algebraic cryptography the quest for natural algorithmic problems, which are hard on most inputs, became one of the main subjects in complexity theory (see discussion in \cite{MSU_book:2011}). It was realized that the average case complexity does not fit well here.
Indeed,  by definition, one cannot consider average case complexity of undecidable problems, which are in the majority in group theory; the proofs of average case results are usually difficult and technical \cite{Gurevich:1991,Wang:1995}, and, most importantly,
there are problems that are provably hard on average but easy on most inputs (see \cite{GMMU:2007, MSU_book:2011} for details). In fact, Gurevich showed in \cite{Gurevich:1991} that the average case complexity is not about ``most'' or ``typical''  instances, but that it grasps the  notion of ``trade-off'' between the time of computation on hard  inputs and how many of such hard instances are there.
Nowadays, generic algorithms  form an organic part of computational algebra and play an essential role in practical computations.

In a surprising twist generic algorithms and ideas  of generic complexity
were recently adopted in abstract computability (recursion theory).
There is interesting and active research there concerning absolutely
undecidable problems, generic Turing degrees,
coarse computability, etc., relating  generic computation with deep structural properties of Turing degrees
\cite{MRyb, Jockusch-Schupp:2012,Bienvenu_Day_Holzl:2013,
Igusa:2013,Downey-Jockusch-Schupp:2013,Downey-Jockusch-McNicholl-Schupp:2014}.

We decided to relativize these ideas to lower complexity classes.
Here we consider the class $\NP$.
Motivation to study generically hardest problems in the class $\NP$ comes
from several areas of mathematics and computer science.
First, as we have mentioned above, average case complexity, even when it is high, does not give information on the hardness of the problem at hand on the typical or generic inputs. Therefore, to study hardness of the problem on most inputs one needs to develop a theory of generically complete problems in the class $\NP$. This is interesting in its own right, especially when much of activity in modern mathematics focuses on generic properties of mathematical objects and how to deal with them.
On the other hand, in modern crypotography, there is a quest for cryptoprimitives which are computationally hard to break on most inputs.
It would be interesting to analyze  which $\NP$-problems are hard on most inputs, i.e., which of them are generically $\NP$-complete. Note, there are $\NP$-complete problems that are generically polynomial \cite{MSU_book:2011}.
All this requires a  robust theory of generic $\NP$-completeness.
As the first attempt to develop such a theory we study here the class of all generically $\NP$-problems, their reductions, and the complete problems in the class.
Most of the time, our exposition follows the seminal Gurevich's paper \cite{Gurevich:1991}
on average complexity.
We conclude with several open problems that seem to be important for the theory.

Here we briefly describe  the structure of the paper and mention the main results.
In Section \ref{se:Preliminaries}, we recall some notions and introduce notation from
the classical decision problems.
In Section \ref{se:generic}, we discuss distributional decision problems (when the set of instances of
the problem comes equipped with some measure), then define the generic complexity and
problems decidable generically (strongly generically) in polynomial time.
In Section \ref{se:reductions}, we define generic polynomial time reductions.
In Section \ref{se:DistHP}, we show that the distributional bounded halting problem for Turing machines
is strongly generically $\NP$-complete.
Notice that though generic Ptime randomized algorithms are usually much easy to come up with
(than say  Ptime on average algorithms), the reductions in the class of  generic $\NP$-problems
are still as  technical as reductions in the class of $\NP$-problems on average. In fact, the reductions
in both classes are similar. Essentially, these are reductions among general randomized problems
and the main technical, as well as theoretical,  difficulty  concerns the transfer of the measure
when reducing one randomized problem to another one. It seems this difficulty is intrinsic to reductions
in randomized computations and does not depend on whether we consider generic or average complexity.
In Section \ref{se:Problems} we discuss some open problems that seem to be
important for the development of the theory of generic $\NP$-completeness.

\section{Preliminaries}
\label{se:Preliminaries}

In this section we introduce notation to follow throughout the paper.

\subsection{Decision problems}
\label{se:prelim}

 Informally, a {\em decision problem} is an arbitrary yes-or-no question for  an (infinite) set of {\em inputs} (or {\em instances})  $I$, i.e., an unary predicate $P$  on $I$. The problem is termed {\em decidable} if  $P$ is computable, and the main classical question is whether a given problem is decidable or not. In complexity theory the predicate $P$ usually is given by its  true  set $L = \{x \in I \mid P(x) = 1\}$, so the decision problem appears as a pair $(I,L)$.   Furthermore, it is assumed usually that  every input $x \in I$ admits a finite description in some finite alphabet $\Sigma$ in such a way that given a word $w \in \Sigma^\ast$ one can effectively determine if $w \in I$ or not. This allows one, without loss of generality, to assume simply that $I =  \Sigma^\ast$. Some care is required when dealing with  distributional problems and we discuss this issue in due course.  From now on, unless said otherwise,  we assume that decision problems are pairs $D = (\Sigma^\ast, L)$, where $L \subseteq \Sigma^\ast$. In this case  $\Sigma$ is the {\em alphabet} of the problem $D$ and we denote it sometimes by $\Sigma_D$; $\Sigma_D^\ast$ is the set of {\em inputs}  or the  {\em domain} of $D$; the set $L$ is the {\em yes} or {\em positive}  part of $D$, denoted sometimes by $D^{yes}$ or $D^+$. In Section \ref{subsec:restrictions} we briefly consider problems of the type $(I,L)$, where $L \subseteq I \subseteq \Sigma^\ast$, not assuming that $I$ is a decidable subset  of $\Sigma^\ast$.

It is natural now to define the {\em size} of $x \in \Sigma^\ast$ to be
its word length $|x|$.
As usual, we define the sphere $\Sigma^n$ of radius $n \in \mathbb{N}$ as the set
of all strings (words) in $\Sigma^\ast$ of size $n$,  and $D_n = D \cap \Sigma^n$.
For a symbol $a \in \Sigma$ and
$n\in\mathbb{N}$ put  $a^n$ to be the string of $n$ symbols $a$.

We assume that alphabet $\Sigma$ comes equipped with a fixed linear ordering. This allows one to introduce a {\em shortlex}  ordering $<_{slex}$ on  the set $\Sigma^\ast$ as follows. We order, first, the words in $\Sigma^\ast$ with respect to their length (size), and if two words have the same length then we compare them in the (left) lexicographical ordering. The successor of a word $x \in \Sigma^\ast$ is denoted by  $x^+$.

\subsection{Deterministic and nondeterministic Turing machines}

In this section we recall the definition of a Turing machine
in order to establish terminology.

\begin{definition}
A one-tape {\em Turing machine} (TM) $M$ is a $5$-tuple
$\gp{Q,\Sigma,q_0,q_1,\delta}$ where:
\begin{itemize}
    \item
$Q = \{q_0,q_1,\ldots,q_m\}$ is a finite set of {\em states};
    \item
$\Sigma = \{a_1,\ldots,a_n\}$ is a finite set called the {\em tape alphabet} which contains at least $2$ symbols;
    \item
$q_0 \in Q$ is the {\em initial state};
    \item
$q_1 \in Q$ is the {\em final state};
    \item
$\delta \subset Q \times (\Sigma\cup\{\sqcup\}) \times Q \times \Sigma \times \{L,R\}$
is the {\em transition relation}.
\end{itemize}
Additionally, $M$ uses a {\em blank symbol} $\sqcup$ different from the
symbols $\Sigma$ to mark the parts of the infinite tape not in
use. This is the only symbol allowed to occur on the tape infinitely
often at any step during the computation.
\end{definition}

We say that a transition relation $\delta$ in the definition of a TM
is {\em deterministic} if for every pair $(q,a) \in Q \times (\Sigma \cup \{\sqcup\})$ there is
a unique five-tuple $(q,a,q',\gamma',d)$ in $\delta$, i.e.,
$\delta$ defines a function $\delta^\ast: Q \times (\Sigma \cup \{\sqcup\})\rightarrow Q \times \Sigma \times \{L,R\}$.
We say that a TM $M$ is {\em deterministic} if its transition relation is.
Otherwise we say that $M$ is a nondeterministic machine (NTM).

Each Turing machine has a {\em tape} with $(\Sigma \cup \{\sqcup\})$-symbols written on it, a {\em head}
specifying a position on the tape, and a {\em state register} containing an element $q\in Q$.
We say that the head observes a symbol $a \in \Sigma \cup \{\sqcup\}$, if $a$ is written on the tape
at the position specified by the head. If a TM $M$ is in the state $q$ and observes a symbol
$a\in \Sigma$, then to perform a step of computations:
\begin{itemize}
    \item
$M$ chooses any element $(q,a,q',a',d)\in \delta$;
    \item
puts $q'$ into the state register;
    \item
writes $a'$ on the tape to the head position;
    \item
moves the head to left or to the right depending on $d$.
\end{itemize}
If $\delta$ contains no tuple $(q,a,q',a',d)$, then we say that $M$ {\em breaks}.

We can define the operation of a TM formally using the notion of a
configuration that contains a complete description of the current
state of computation. A {\em configuration} of $M$ is a triple
$(q,w,u)$ where $w,u$ are $\Sigma$-strings and $q \in Q$.
\begin{itemize}
    \item
$w$ is a string to the left of the head;
    \item
$u$ is the string to the right of the head, including the symbol
scanned by the head;
    \item
$q$ is the current state.
\end{itemize}

We say that a configuration $(q,w,u)$ {\em yields} a configuration
$(q',w',u')$ in one step, denoted by
    $$(q,w,u) \stackrel{M}{\rightarrow} (q',w',u'),$$
if a step of a machine from
configuration $(q,w,u)$ results in configuration $(q',w',u')$.
Note that if the machine is nondeterministic, then a configuration
can yield more than one configuration.
Using the relation ``yields in one step'' one can define relations
``yields in $k$ steps'', denoted by
    $$(q,w,u) \stackrel{M^k}{\rightarrow} (q',w',u'),$$
and ``yields'', denoted by
    $$(q,w,u) \stackrel{M^\ast}{\rightarrow} (q',w',u').$$

We say that $M$ {\em halts} on $x \in \Sigma^\ast$ if the
configuration $(q_0,\varepsilon,x)$ yields a configuration $(q_1,w,u)$
for some $\Sigma$-strings $w$ and $u$.
The number of steps $M$
takes to stop on a $\Sigma$-string $x$ is denoted by $T_M(x)$. If $M$ does not halt on $x \in \Sigma^\ast$ then we put
$T_M(x) =  \infty$.

The
{\em halting problem} for $M$ is an algorithmic question to
determine whether $M$ halts or not on  an input
$x\in \Sigma^\ast$, i.e., whether $T_M(x) = \infty$ or not.

We say that a TM $M$ {\em solves} or {\em decides} a decision
problem $D$ over an alphabet $\Sigma$ if $M$ stops on every input
$x \in \Sigma^\ast$ with an answer:
\begin{itemize}
    \item
$Yes$ (i.e., at a configuration $(f,\varepsilon,w)$, where $w$ starts with $a_1 a_1$) if $x \in
L(D)$;
    \item
$No$ (i.e., at configuration $(f,\varepsilon,w)$, where $w$ starts with $a_1 a_0$) otherwise.
\end{itemize}
We say that $M$ {\em partially decides} $D$ if it decides $D$
correctly on a subset $D'$ of $D$ and on $D - D'$ it either does
not stop or stops with an answer $DontKnow$ (i.e., stops at
configuration $(f,\varepsilon,w)$, where $w$ starts with $a_0$).
In the event when $M$ breaks or outputs $DontKnow$ the value of $T_M(x)$
is $\infty$.

\subsection{Polynomial time reductions}

For a function $f: \mathbb{N} \to \mathbb{N}$ define $\cclass{TIME}(f)$  [$\cclass{NTIME}(f)$ resp.]
to be the class of all decision problems decidable by some
deterministic [nondeterministic resp.] Turing machine within time $f(n)$.
Two of the most used classes of decision problems $\cclass{P}$ and
$\cclass{NP}$ are defined as follows:
    $$\cclass{P} = \bigcup_{k=1}^\infty \cclass{TIME}(n^k) ~\mbox{ and }~ \cclass{NP} = \bigcup_{k=1}^\infty \cclass{NTIME}(n^k).$$
Clearly $\P \subseteq \NP$. It is an old, open problem  whether $\NP = \P$ or not.

The classical {\em polynomial time many-to-one} or  {\em Karp  reductions} provide a  crucial tool to deal with problems in $\NP$. We recall it in the following definition and refer to them simply as to Ptime reductions.

\begin{definition}
Let $D_1$ and $D_2$ be decision problems. We say that a function
$f:\Sigma_{D_1}^\ast \rightarrow \Sigma_{D_2}^\ast$ is a {\em Ptime reduction}, or $f$ {\em Ptime reduces} $D_1$ to $D_2$, and write
$D_1 \stackrel{f}{\rightarrow}_P D_2$, if
\begin{itemize}
    \item
$f$ is polynomial time computable;
    \item
$x\in D_1$ if and only if $f(x)\in D_2$.
\end{itemize}
\end{definition}

We say that a Ptime reduction $f$ is {\em size-invariant} if
$$|x_1|<|x_2| \ \Longleftrightarrow \ |f(x_1)|<|f(x_2)|.$$
Notice, that many classical Ptime reductions are size-invariant
(see \cite{papa}).

Now, for a size-invariant reduction $f$ the function
    $$\CS_f(n) := |f(x)|, \ where \ |x| = n,$$
 is well defined and strictly increasing.  We refer to $\CS_f$ as the {\em size
growth} of $f$.

A problem $D \in \NP$ is called {\em $\NP$-complete} if every problem $D^\prime \in \NP$ is Ptime reducible to $D$.
The following is a classic result in complexity theory (see \cite{papa}).

\begin{theorem} \label{th:NP_compl}
The following holds.
\begin{enumerate}
    \item[(a)]
If $f$ is a Ptime reduction from $D_1$ to $D_2$ and $M$ is an Turing
machine solving $D_2$ in polynomial time then $M\circ f$ solves $D_1$ in polynomial time.
    \item[(b)]
3SAT is $\NP$-complete.
\end{enumerate}
\end{theorem}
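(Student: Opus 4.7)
The plan for part (a) is routine. If $f$ is computable in time $p(n)$ and $M$ decides $D_2$ in time $q(m)$, then on input $x$ with $|x|=n$ the output $f(x)$ has length at most $p(n)$, so $M(f(x))$ runs in time at most $q(p(n))$. Thus $M\circ f$ has total running time bounded by $p(n)+q(p(n))$, which is polynomial in $n$, and correctness follows from the equivalence $x\in D_1 \iff f(x)\in D_2$ combined with the fact that $M$ correctly decides $D_2$.

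For part (b), I will apply the classical Cook--Levin argument. Fix an arbitrary $D\in\NP$ decided by a nondeterministic TM $N$ in time $p(n)$ for some polynomial $p$. Given $x$ with $|x|=n$, the goal is to construct, in polynomial time, a $3$CNF formula $\phi_x$ whose satisfying assignments correspond bijectively to accepting computations of $N$ on $x$. I would introduce Boolean variables describing a computation tableau of width and height $p(n)+1$: for each $0\le i,t\le p(n)$, variables indicating which symbol of $\Sigma\cup\{\sqcup\}$ occupies cell $i$ at time $t$, whether the head is at cell $i$ at time $t$, and which state of $Q$ is active at time $t$. The clauses of $\phi_x$ then enforce four families of local conditions: uniqueness of symbol/head/state at each time step; the initial row encodes the configuration $(q_0,\varepsilon,x)$; at some row the state equals $q_1$; and every local three-cell window between consecutive rows is consistent with some tuple of $\delta$ (this is where the nondeterminism is captured by a disjunction over the applicable transitions). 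Each clause involves $O(1)$ variables, so $\phi_x$ has size polynomial in $p(n)$. The wider clauses produced by the uniqueness conditions are rewritten in $3$CNF by the standard auxiliary-variable trick $(\ell_1\vee\cdots\vee\ell_k)\equiv (\ell_1\vee\ell_2\vee y_1)\wedge(\bar y_1\vee\ell_3\vee y_2)\wedge\cdots\wedge(\bar y_{k-3}\vee\ell_{k-1}\vee\ell_k)$, which inflates the size only by a constant factor.

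The main obstacle is the bookkeeping for the transition clauses: I must ensure that cells outside the head's local window do not change between consecutive rows, that the head moves exactly one position in the direction dictated by the chosen transition, and that the newly written symbol and successor state agree with some $(q,a,q',a',d)\in\delta$. Once these local constraints are written correctly, both directions of the correspondence between satisfying assignments of $\phi_x$ and accepting runs of $N$ on $x$ follow by straightforward induction on the time step, and the entire reduction $x\mapsto \phi_x$ is clearly computable in polynomial time. Part (a) presents no difficulty, so the bulk of the work is this tableau construction.
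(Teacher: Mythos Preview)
Your proposal is correct as a sketch of the standard arguments; part (a) is the routine composition bound and part (b) is a faithful outline of the Cook--Levin tableau construction with the usual conversion to $3$CNF.

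However, the paper does not actually prove this theorem: it is stated there as a classical result with a reference to Papadimitriou's textbook \cite{papa}, and no argument is given in the paper itself. So your write-up differs from the paper's treatment only in that you supply a self-contained proof sketch where the paper simply cites the literature. What this buys you is independence from the external reference; what the citation buys the paper is brevity, since the result is entirely standard and not the paper's contribution.
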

Here, and below, by $M\circ f$ we denote the algorithm that is a composition of the TM $M$
and a TM that computes $f$.

\section{Distributional problems and generic case complexity}
\label{se:generic}

Let us first recall some basic definitions of probability theory that will be used in this section.
A \emph{probability measure} on $\Sigma^\ast$ is a function $\mu:\Sigma^\ast\to [0,1]$
satisfying $\sum_{x\in\Sigma^\ast} \mu(x)=1$.
An \emph{ensemble of probability measures} on $\Sigma^\ast$ is a
collection of sets $\{S_n\}_{n=1}^\infty$ of $\Sigma^\ast$ (not necessarily disjoint) and a collection
of probability measures $\mu=\{\mu_n\}_{n=1}^\infty$ satisfying $\supp(\mu_n) \subseteq S_n$
and $S=\bigcup S_n$.
A \emph{spherical ensemble of probability measures} on $\Sigma^\ast$ is an
ensemble with $S_n=\Sigma^n$. In particular,
a spherical ensemble of probability measures on $\Sigma^\ast$
is uniquely defined by a collection of measures $\{\mu_n\}_{n=1}^\infty$
satisfying $\supp(\mu_n) \subseteq \Sigma^n$.

\subsection{Distributional decision problems}
\label{subsec:dist}

The average case complexity deals with ``expected'' running time of algorithms, while the generic case complexity deals with ``most typical'' or {\em generic} inputs  of a given problem $D = (\Sigma^\ast,D^+)$.
These  require to measure or  compare various subsets of inputs from $\Sigma^\ast$.
 There are several standard ways to do so, for example, by introducing either a probability measure $\mu$ on $\Sigma^\ast$ (as was done in \cite{Levin:1986,Gurevich:1991}), or an {\em ensemble} of probability measures defined on spheres or balls of $\Sigma^\ast$ (see \cite{Impagliazzio95}).
 In many cases, all three approaches are equivalent and lead to similar results. Following the current tradition in  computer science, we elect here to work with a {\em spherical} ensemble $\mu = \{\mu_n\}_{n=1}^\infty$ of   probability measures $\mu_n$ defined on the spheres $\Sigma^n$.  In what follows, we always assume that $\Sigma$ is a finite alphabet and every  measure $\mu_n$ from the ensemble $\mu$ is atomic, i.e., it is given by a probability function (which we denote again by $\mu_n$) $\mu_n:\Sigma^n \to \mathbb{R}$ so that $\mu_n(S) = \sum_{x \in S} \mu_n(x)$ for every subset $S \subseteq \Sigma^n$. The pair $(\Sigma^\ast, \mu)$ is termed a {\em distributional space}. Whether $\mu_n$ is a probability measure or the corresponding probability function will be always clear from the context, so no confusion should arise.

We want to stress here that generic properties of a given  decision problem depend  on the chosen ensemble $\mu$ and $\mu$ is an essential part of the problem (see \cite{Gurevich:1991} for details).

\begin{definition}
A {\em distributional decision problem} is a triple $(\Sigma^\ast, D^+, \mu)$, where $D = (\Sigma^\ast, D^+)$
is a decision problem and $(\Sigma^\ast, \mu)$  a distributional space.
\end{definition}
  Usually  we refer to a distributional problem $(\Sigma^\ast, D^+, \mu)$ as a pair $(D, \mu)$, where $D = (\Sigma^\ast, D^+)$.

There are two important constructions on distributional spaces, introduced in \cite{Gurevich:1991}.
Since we use here ensembles of distributions,  unlike \cite{Gurevich:1991}, where single  measures were  used,  we give below precise definitions. Notice,  that we always assume that $\Sigma_{i \in J} a_i = 0$ if $J = \emptyset$.

\begin{definition}[Transfers of ensembles]
Let $\Sigma$ and $\Pi$ be finite alphabets,  $(\Sigma^\ast, \mu)$ and  $(\Pi^\ast, \nu)$ distributional spaces, and  $f:\Sigma^\ast \rightarrow \Pi^\ast$ a size-invariant function. Then  $\nu$ is the  $f$-{\em transfer} of  $\mu$ (or $f$ transfers   $\mu$ to $\nu$) if for any $y\in \Pi^\ast$ the following equality holds
\begin{equation}\label{eq:induced_f}
    \nu_{|y|}(y) =
    \left\{
    \begin{array}{ll}
    \sum_{x \in f^{-1}(y)} \mu_{|x|}(x), & \mbox{if $|y| = |f(z)|$ for some $z$};\\
    |\Pi|^{-|y|}, & \mbox{otherwise.}\\
    \end{array}
    \right.
\end{equation}
\end{definition}

\begin{definition} [Induced ensembles]
Let $(\Sigma^\ast, \mu)$ be a  distributional space and  $S \subseteq \Sigma^\ast$.  Then  an ensemble $\mu^S = \{\mu^S_n\}_{n=1}^\infty$ on   $\Sigma^\ast$ is called $S$-induced by $\mu$   if for any $x\in \Sigma^\ast$ the following equality holds
\begin{equation}\label{eq:induced_S}
    \mu^S_{|x|}(x) =
    \left\{
    \begin{array}{ll}
    \frac{\mu_{|x|}(\{x\}\cap S)}{\mu_{|x|}(S\cap\Sigma^{|x|})}, & \mbox{if }\mu_{|x|}(S\cap\Sigma^{|x|}) \ne 0;\\
    \mu_{|x|}(x), & \mbox{otherwise.}\\
    \end{array}
    \right.
\end{equation}
\end{definition}

\subsection{Generic complexity}

Let $(\Sigma^\ast, \mu)$ be a distributional space and $S \subset \Sigma^\ast$. The  function
$$n \mapsto \mu_n(S \cap \Sigma^n)$$
is called the {\em density function} of $S$  and its limit (if exists)
    $$\rho(S) = \lim_{n\rightarrow \infty}\mu_n(S \cap \Sigma^n)$$
is called the {\em asymptotic density} of $S$ in  $(\Sigma^\ast, \mu)$.

\begin{definition}
A subset $S \subseteq \Sigma^\ast$ is called
\begin{itemize}
    \item
{\em generic} in $\Sigma^\ast$ if $\rho(S) = 1$;
    \item
{\em strongly generic} in $\Sigma^\ast$ if $\rho(S) = 1$ and $\mu_n(S \cap \Sigma^n)$
converges to $1$ super polynomially fast, i.e.,
$ |1-\mu_n(S \cap \Sigma^n)| = O(n^{-k})$
for any $k \in \mathbb{N}$;
    \item
{\em negligible} in $\Sigma^\ast$ if $\rho(S) = 0$;
    \item
{\em strongly negligible} in $\Sigma^\ast$ if $\rho(S) = 0$ and $\mu_n(S \cap \Sigma^n)$
converges to $0$ super polynomially fast, i.e.,
$ |\mu_n(S \cap \Sigma^n)| = O(n^{-k})$ for any $k \in \mathbb{N}$.
\end{itemize}
\end{definition}

Notice that we use the term ``generic'' in the sense of ``typical''.
The same term has also been used in complexity and set theory
to refer to sets that are far from typical, that are constructed
through Cohen forcing.

\begin{definition}\label{de:generic_poly}
Let $(D,\mu)$ be a distributional decision problem.
\begin{itemize}
    \item
We say that $(D,\mu)$ is {\em decidable generically in polynomial time} (or {\em GPtime
decidable}) if there exists a Turing machine $M$ that partially decides $D$
within time $T_M(x)$ and a polynomial $p(x)$ such that
    $$\mu_n\{x \in \Sigma^n \mid T_M(x)>p(n) \} = o(1).$$
In this case we say that $M$ is a generic polynomial time decision algorithm for $D$ and $D$ has {\em generic time
complexity} at most $p(n)$.
    \item
We say that $(D,\mu)$ is {\em decidable strongly generically in polynomial time} (or {\em
SGPtime decidable}) if there exists a Turing machine $M$ that partially decides
$D$ within time $T_M(x)$ and a polynomial $p(x)$ such that for any
polynomial $q(n)$
    $$\mu_n\{x\in \Sigma^n \mid T_M(x)>p(n) \} = o(1/q(n)).$$
In this case, we say that $M$ is a strongly generic polynomial time decision algorithm for $D$ and $D$ has {\em strong generic time
complexity} at most $p(n)$.
\end{itemize}
We refer to the sequence $\mu_n\{x\in \Sigma^n \mid T_M(x)>p(n) \}$ as a {\em control sequence} of the
algorithm $M$ relative to the complexity bound $p$ and denote it by $\CC_{M,p}$.
\end{definition}

In other words, a problem $(D,\mu)$ is GPtime (SGPtime) decidable if there exists a polynomial time TM that partially decides $D$ and its halting set is generic (strongly generic) in $(\Sigma^\ast,\mu)$.

\subsection{Distributional $\NP$-problems}
 \label{subsec:DistNP}

In this section we recall the notion of a distributional $\NP$-problem,
which is a distributional analog of the
classical $\NP$-problems.

\begin{definition}
[Ptime computable real-valued function]
A function $f:\Sigma^\ast \rightarrow [0,1]$ is {\em computable in polynomial time}
if there exists a polynomial time algorithm that for every $x\in \Sigma^\ast$
and $k\in\MN$ computes a binary fraction $f_{x,k}$ satisfying
    $$|f(x) - f_{x,k}| < 2^{-k}.$$
\end{definition}

\begin{definition}
[Ptime computable ensembles of probability measures]
We say that a spherical ensemble of measures $\mu = \{\mu\}_{n=1}^\infty$ on $\Sigma^\ast$ is Ptime
computable if the function $\Sigma^\ast \to [0,1]$ defined by $x \to \mu_{|x|}(x)$ is Ptime computable.
\end{definition}

Denote by $\mu^\ast = \{\mu_n^\ast\}_{n=1}^\infty$ the ensemble of probability {\em distributions} defined by
    $$\mu_{|x|}^\ast(x) = \mu_{|x|}\rb{\{y\in\Sigma^{|x|} \mid y <_{slex} x\}}.$$
 As above, the ensemble $\mu^\ast$ is called Ptime computable if the function $x \to \mu_{|x|}^\ast(x)$ is Ptime computable.

\begin{lemma}\label{le:S-induced}
Let $(\Sigma^\ast, \mu)$ be a distributional space. Then the following hold:
\begin{itemize}
\item[(a)]
If   $\mu^\ast$ is  Ptime computable then $\mu$ is   Ptime computable.
\item[(b)]
If $S$ is a subset of $\Sigma^\ast$ such that the function $n \to \mu_n(S \cap \Sigma^n)$ is Ptime computable then the $S$-induced on $\Sigma^\ast$ ensemble of measures $\mu^S$   is Ptime computable.
\end{itemize}
\end{lemma}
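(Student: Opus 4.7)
The plan is to prove each part by a direct precision-tracking calculation, exploiting the structure of the defining equations.

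For part (a), the identity I would use is $\mu_n(x) = \mu^\ast_n(x^+) - \mu^\ast_n(x)$, where $x^+$ denotes the shortlex successor of $x$ within $\Sigma^n$, with the convention that $\mu^\ast_n(y^+) = 1$ whenever $y$ is the maximum word of $\Sigma^n$ (so that $\mu_n(y) = 1 - \mu^\ast_n(y)$ on the right boundary). Given $x$ and a precision parameter $k$, I would invoke the assumed Ptime algorithm for $\mu^\ast$ twice, at precision $2^{-(k+1)}$ each, and output the difference of the resulting dyadics. The triangle inequality then bounds the total error by $2^{-k}$, and the running time is polynomial in $|x| + k$.

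For part (b), the strategy is to approximate the numerator and denominator of \eqref{eq:induced_S} separately and then divide. The denominator $\mu_n(S\cap\Sigma^n)$ is Ptime computable by hypothesis. The numerator $\mu_{|x|}(\{x\}\cap S)$ equals $\mu_{|x|}(x)$ if $x\in S$ and $0$ otherwise, so it is Ptime computable provided $S$ is Ptime decidable and $\mu$ itself is Ptime computable (both standing assumptions in the intended context, and the latter follows from part (a) whenever $\mu^\ast$ is available). To guarantee a final output precision of $2^{-k}$, I would carry out the internal approximations at precision $2^{-(k+|x|+c)}$ for a suitable absolute constant $c$, which remains polynomial in $|x|+k$.

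The main obstacle is the numerical stability of the division, because $\mu_n(S\cap\Sigma^n)$ may be exponentially small in $|x|$. The saving observation is that whenever $x\in S$ one has $\mu_{|x|}(x)\le\mu_{|x|}(S\cap\Sigma^{|x|})$, so the exact quotient always lies in $[0,1]$; combined with the extra $|x|$ bits of working precision this yields a controlled error bound for $\tilde a/\tilde b$. The degenerate branch of \eqref{eq:induced_S} is then handled by a threshold test: if the approximate denominator falls below $2^{-(k+|x|+c)}$, then the numerator is necessarily of the same order of smallness (or zero), so that the fallback value $\mu_{|x|}(x)$ already agrees with both branches of the definition to within $2^{-k}$, and outputting it is safe.
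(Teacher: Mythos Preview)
The paper's own proof of this lemma is the single line ``Follows directly from definitions,'' so there is essentially nothing to compare against; your write-up is already far more explicit than what the authors provide.  Your argument for part~(a) is correct and is exactly the computation the paper is tacitly invoking.

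For part~(b), however, your threshold argument does not work as stated.  Suppose $x\in S$ and the true denominator $b=\mu_{|x|}(S\cap\Sigma^{|x|})$ is nonzero but smaller than your threshold $2^{-(k+|x|+c)}$.  Then we are in the first branch of~\eqref{eq:induced_S}, and the correct value is $a/b$ with $a=\mu_{|x|}(x)$.  You correctly observe $a\le b$, so $a$ is also tiny; but $a/b$ need not be tiny --- it could equal~$1$ (take $S\cap\Sigma^{|x|}=\{x\}$).  Outputting the ``fallback'' value $\mu_{|x|}(x)=a<2^{-k}$ then misses the true value $a/b=1$ by almost~$1$, not by $2^{-k}$.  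The same difficulty arises for $x\notin S$: one branch gives~$0$ and the other gives $\mu_{|x|}(x)$, and with only approximate access to $b$ you cannot tell which branch applies.  In short, the extra $|x|$ bits of working precision buy you nothing unless you also assume a \emph{polynomial lower bound} on the nonzero values of $\mu_n(S\cap\Sigma^n)$ --- which is precisely what holds in the paper's only application (Proposition~\ref{pr:restriction2BHMg}, where the denominator is $1/n$), but is not among the stated hypotheses of the lemma.

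You are right to flag the further implicit hypotheses (Ptime decidability of $S$, Ptime computability of $\mu$); the paper is silent on these as well.  So the honest summary is: your part~(a) is fine; your part~(b) needs either an additional lower-bound hypothesis on the denominator or a different argument, and the lemma as literally stated in the paper appears to rely on unstated assumptions that happen to hold wherever it is invoked.
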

  \begin{proof}
  Follows directly from definitions.
  \end{proof}

\begin{definition}
$\DistNP$ is a class of distributional decision problems $(D,\mu)$ such
that
\begin{itemize}
    \item
$D \in \NP$;
    \item
$\mu^*$ is a Ptime computable ensemble of probability distributions  on $\Sigma_D^\ast$.
\end{itemize}
\end{definition}

\begin{definition}
$\GP$ is the class of GPtime decidable distributional decision problems
(not necessarily from $\DistNP$). $\SGP$ is the class of SGPtime decidable
distributional decision problems.
\end{definition}

We want to point out that classes $\GP$ and $\SGP$ contain some exotic
problems, e.g., some undecidable problems. For more information see
\cite{HM,GMMU:2007,MRyb}.

\section{Generic Ptime reductions}
\label{se:reductions}

In this section we introduce the notion of a {\em generic polynomial
reduction}  and describe two particular types of
reductions, called size and measure reductions.

Observe first that the classical Karp reductions do not work for generic complexity. Indeed, the following example shows that   a
Ptime reduction $D \stackrel{f}{\rightarrow} E$ and a generic polynomial time decision algorithm for $E$ do not immediately provide a generic polynomial time decision algorithm for $D$.

\begin{example}
\label{ex:gen-red}
Let $\Sigma = \{0,1\}$ be a binary alphabet and $\mu$
the spherical ensemble of uniform measures $\mu_n$  on $\Sigma^n$. Let
$f:\Sigma^\ast \rightarrow \Sigma^\ast$ be a monoid homomorphism defined by
    $$0 \stackrel{f}{\mapsto} 00 ~~\mbox{ and }~~ 1 \stackrel{f}{\mapsto} 1.$$
Now, for a decision problem $D = (\Sigma^\ast,D^+)$ consider a decision problem $f(D) = (\Sigma^\ast,f(D^+))$.
It follows from the construction that $D  \stackrel{f}{\rightarrow} f(D)$
is a Ptime reduction and $f(D) \in \NP$, provided $D \in \NP$. Furthermore,  it is easy to check that the set
$f(\Sigma^\ast)$, as well as $f(D^+)$, is strongly negligible in $(\Sigma^\ast, \mu)$.
This implies that a partial  algorithm $A$ that on an each input from $\Sigma^\ast \setminus f(\Sigma^\ast)$ says ``No'' and does halt on $f(\Sigma^\ast)$, is a strongly generic polynomial time decision algorithm for  $(f(D), \mu)$.  Nevertheless, $A$ does not reveal any useful information on $D$.
\end{example}

\begin{definition}
Let $(\Sigma^\ast,D,\mu), (\Delta^\ast,E,\nu) \in \DistNP$ and
$D  \stackrel{f}{\rightarrow} E$ a Ptime size-invariant reduction.
\begin{itemize}
    \item[{\bf(R0)}]
We say that $f$ is a {\em weak GPtime reduction} if there exists a TM $M$,
which GPtime decides $(E,\nu)$ and $M \circ f$ GPtime decides
$(D,\mu)$.

\item [{\bf(R1)}] We say that $f$ is an {\em GPtime reduction} if for every TM $M$,
which GPtime decides $(E,\nu)$ the composition $M \circ f$ GPtime
decides $(D,\mu)$.
    \item[{\bf(R2)}]
We say that $f$ is an {\em SGPtime reduction} if for every TM $M$, which
SGPtime decides $(E,\nu)$ the composition $M \circ f$ SGPtime
decides $(D,\mu)$.
\end{itemize}
\end{definition}

We give examples of SGPtime reductions in the next two sections.

\begin{remark}
One can introduce reductions $D  \stackrel{f}{\rightarrow} E$ of a more general type by allowing the function $f$ to be defined only on a generic (strongly generic) subset $Y$ of $\Sigma_D^\ast$ with the polynomial time computable characteristic function $\chi_Y$.
\end{remark}

\begin{proposition}[Transitivity of GPtime and SGPtime reductions]
The classes of all GPtime and SGPtime reductions are closed under composition.
\end{proposition}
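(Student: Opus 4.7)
The plan is to unpack the definitions and observe that the universal quantifier over decision algorithms in (R1) and (R2) makes the composition argument nearly automatic. Concretely, suppose $f : D \to E$ and $g : E \to F$ are both GPtime reductions in the sense of (R1). First I would verify that the set-theoretic composition $g \circ f$ is itself a Ptime size-invariant reduction from $D$ to $F$: polynomial time computability is closed under composition (compose the Ptime transducers for $f$ and $g$); size-invariance is preserved because $|x_1|<|x_2| \iff |f(x_1)|<|f(x_2)| \iff |g(f(x_1))|<|g(f(x_2))|$; and the biconditional $x\in D \iff f(x)\in E \iff g(f(x))\in F$ holds.

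The main step is the semantic claim. Take any TM $N$ that GPtime decides $(F,\lambda)$ (where $\lambda$ is the measure on $\Sigma_F^\ast$). Since $g$ is a GPtime reduction, applying (R1) to $g$ with the machine $N$ gives that $N \circ g$ GPtime decides $(E,\nu)$. Now apply (R1) to $f$ with the machine $M := N\circ g$: this yields that $M\circ f = N\circ (g\circ f)$ GPtime decides $(D,\mu)$. Since $N$ was arbitrary among GPtime decision algorithms for $F$, we conclude that $g\circ f$ is a GPtime reduction from $D$ to $F$. The SGPtime case (R2) is identical with ``GPtime decides'' replaced everywhere by ``SGPtime decides''.

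There is really no obstacle here; the only subtle point worth flagging is that the weaker notion (R0) does \emph{not} compose by this argument, because the existential quantifier there gives only one specific machine $M$ that works for $g$, and there is no guarantee that $M\circ g$ is among the machines that work for $f$. This is precisely why the definitions of (R1) and (R2) are stated with universal quantifiers, and the proposition essentially records this design choice. Thus the proof reduces to a one-line chase through the definitions after verifying size-invariance and correctness of the composite, which I would present in two short displayed implications rather than a long calculation.
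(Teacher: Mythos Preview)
Your proposal is correct and takes essentially the same approach as the paper, which simply states that the result ``follows directly from the definitions.'' You have merely (and helpfully) spelled out the routine verification that the paper leaves implicit, including the useful side remark on why the weak notion (R0) is not obviously transitive.
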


\begin{proof}
Follows directly from the definitions.
\end{proof}

It is not known if the class of weak GPtime reductions is transitive.

\begin{definition}
Let $(D,\mu)$ be a distributional decision problem. We say that
  \begin{itemize}
  \item $(D,\mu)$ is {\em SGPtime hard}
for $\DistNP$ if every $\DistNP$ problem SGPtime reduces to $(D,\mu)$.
  \item $(D,\mu)$  is {\em SGPtime complete} for $\DistNP$ if $(D,\mu) \in \DistNP$ and $(D,\mu)$ is {\em SGPtime hard} for $\DistNP$.
\end{itemize}
\end{definition}

\subsection{Change of size}
\label{se:change_size}

In this section, we introduce {\em change of size} (CS) reductions.

\begin{definition}
\label{de:CS}
Let $(D,\mu), (E,\nu) \in \DistNP$ and $D  \stackrel{f}{\rightarrow} E$
a Ptime size-invariant reduction of $D$ to $E$. If $\nu$ is the $f$-transfer by $\mu$ (see Section \ref{subsec:dist}) then   $f$ is called a {\em CS-reduction}.
\end{definition}


\begin{figure}[htbp]
\centerline{
\includegraphics[scale=0.6]{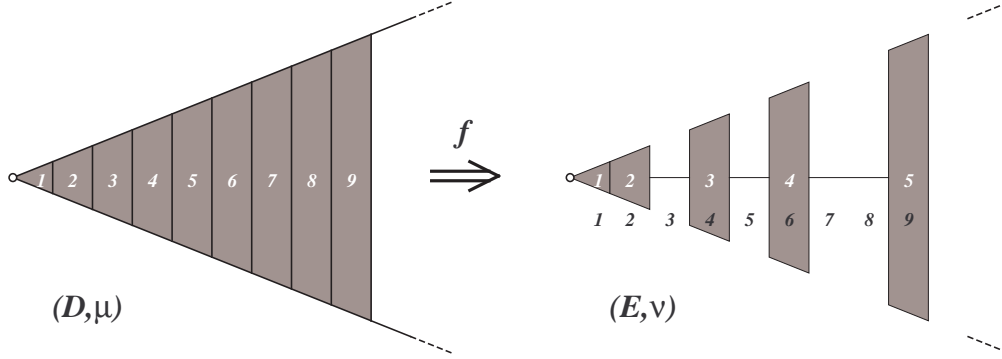} }
\caption{\label{fi:size_increase} In this example $(E,\nu)$ is obtained from $(D,\mu)$ by
increasing the sizes of elements.}
\end{figure}

\begin{theorem}[CS-reductions are  GPtime and SGPtime reductions]\label{le:CS}
Let $(D,\mu), (E,\nu) \in \DistNP$ and
$(D,\mu)  \stackrel{f}{\rightarrow} (E,\nu)$ a CS-reduction. If $\CS_f$ is bounded by a
polynomial, then $f$ is a GPtime and SGPtime reduction.
\end{theorem}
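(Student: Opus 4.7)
My plan is to unfold the definitions and exploit the identity built into the $f$-transfer. Fix any TM $M$ that (S)GPtime decides $(E,\nu)$, with associated polynomial time bound $p$, and consider $N := M\circ f$. Correctness of $N$ as a partial decider for $D$ is immediate from the Karp reduction property $x \in D \Leftrightarrow f(x) \in E$, since whenever $M$ halts with a Yes or No on $f(x)$ the same answer is correct for $x$. For the time complexity, pick a polynomial $r$ with $T_f(x) \le r(|x|)$ and set $p'(n) := r(n) + p(\CS_f(n))$; this is a polynomial in $n$ because $\CS_f$ is polynomially bounded by hypothesis. Whenever $T_M(f(x)) \le p(|f(x)|) = p(\CS_f(|x|))$ one has $T_N(x) \le p'(|x|)$, so the bad set
\[
B_n := \{x \in \Sigma^n : T_N(x) > p'(n)\}
\]
is contained in $\{x \in \Sigma^n : T_M(f(x)) > p(\CS_f(n))\}$.

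The heart of the proof is to convert the $\mu_n$-measure of $B_n$ into the $\nu_m$-measure of the corresponding bad set for $M$ at length $m := \CS_f(n)$. Size-invariance of $f$ gives two things: first, $f(\Sigma^n) \subseteq \Pi^m$; second, any $y \in \Pi^m$ with $f^{-1}(y) \ne \emptyset$ satisfies $f^{-1}(y) \subseteq \Sigma^n$, because $\CS_f$ is strictly increasing so $|f(x)| = m$ forces $|x| = n$. Since $m = |f(z)|$ for $z \in \Sigma^n$, the first clause of the $f$-transfer formula applies and $\nu_m(y) = \sum_{x \in f^{-1}(y)} \mu_n(x)$ for every $y \in \Pi^m$. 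Grouping $x$ by its image $y = f(x)$ yields
\[
\mu_n(B_n) \;\le\; \sum_{\substack{y \in \Pi^m \\ T_M(y) > p(m)}} \sum_{x \in f^{-1}(y)} \mu_n(x) \;=\; \nu_m\bigl(\{y \in \Pi^m : T_M(y) > p(m)\}\bigr).
\]

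The right-hand side is precisely the control sequence of $M$ on $(E,\nu)$ evaluated at length $m$. For the GP case it is $o(1)$ as $m \to \infty$, and since $\CS_f$ is strictly increasing, $m \to \infty$ as $n \to \infty$, so $\mu_n(B_n) = o(1)$. For the SGP case the hypothesis yields $o(1/q(m))$ for every polynomial $q$; given a target polynomial $q'(n)$, take it non-decreasing without loss of generality and note that $m = \CS_f(n) \ge n$ (because $\CS_f$ is strictly increasing and integer-valued on $\mathbb{N}$), giving
\[
\mu_n(B_n)\cdot q'(n) \;\le\; \nu_m(\{T_M > p(m)\}) \cdot q'(m) \;\longrightarrow\; 0.
\]
The only genuinely delicate point is the pushforward identity in the middle paragraph: it rests on size-invariance (so that $f^{-1}(\Pi^m)$ lives on the single sphere $\Sigma^n$) together with the exact form of the $f$-transfer definition. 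Everything else is a routine packaging of polynomials.
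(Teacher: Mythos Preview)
Your proof is correct and follows essentially the same route as the paper: define the composite algorithm with time bound $p'(n)=T_f(n)+p(\CS_f(n))$, use the $f$-transfer identity to rewrite $\mu_n$ of the bad set as $\nu_{\CS_f(n)}$ of the corresponding bad set for $M$, and then invoke that $\CS_f$ is strictly increasing (so the control sequence along $m=\CS_f(n)$ is a subsequence, and $\CS_f(n)\gtrsim n$ for the SGP rate). If anything, your treatment of the pushforward step and of the containment $B_n\subseteq\{x:T_M(f(x))>p(\CS_f(n))\}$ is more carefully stated than the paper's, where the displayed inequality actually points the wrong way.
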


\begin{proof}
Let  $A$ be an algorithm that generically decides $(E,\nu)$ within
a polynomial time upper bound  $p(m)$. Then $A \circ f$  is a partial decision algorithm for $(D,\mu)$. Since $\nu$ is induced by
$\mu$, one has:
\begin{align*}
o(1) & = \nu_{\CS_f(k)}\{f(x) \mid T_A(f(x))>p(\CS_f(k)), ~ |f(x)|=\CS_f(k)\} \\
&= \mu_{k}\{x \mid T_{A} (f(x))>p(\CS_f(k)), ~ |x|=k\} \\
&\leq  \mu_{k}\{x \mid T_{A\circ f} (x)>p(\CS_f(k))+ T_f(k), ~ |x|=k\}.
\end{align*}
Observe, that $p \circ \CS_f  + T_f$  is polynomially bounded, since  $\CS_f$ and $T_f$ are polynomially  bounded. Clearly, the control sequence
$\CC_{A\circ f,p \circ \CS_f + T_f}$ is at most $\CC_{A,p} \circ  \CS_f$.  Notice, that
$\CC_{A,p} \circ  \CS_f$ is an infinite  subsequence of $\CC_{A,p}$ (because  $\CS_f$ is strictly increasing), hence it converges to $0$, so $p \circ \CS_f + T_f$ is a generic upper bound for $A \circ f$. This proves the first statement of the theorem.

To prove the second statement, assume that  $(E,\nu)$ is SGPtime decidable by $A$ within
a polynomial time $p$. Then for the control sequence $\CC_{A,p}$ one has
$$
\CC_{A,p} = o(1/n^k)
$$
for any positive integer $k$. Due to the  inequalities above,
the control sequence for $A\circ f$  with respect to the polynomial bound
$p \circ \CS_f + T_f$ satisfies the following inequality
$$
\CC_{A\circ f,p \circ \CS_f + T_f} = o(1/\CS_f(n)^k) \le o(1/n^k)).
$$
Hence $(D,\{\mu_n\}_{n=1}^\infty)$ is SGPtime decidable by $A\circ f$, as claimed.
\end{proof}

By Theorem \ref{le:CS} a CS-reduction generally increases
time complexity and improves control sequence.

\subsection{Change of measure}
\label{se:change_measure}

In this section, we define {\em change of measure} (CM) reductions.

\begin{definition}
\label{de:CM-red}
Let $(D,\mu), (E,\nu) \in \DistNP$ and $D  \stackrel{f}{\rightarrow} E$ a
Ptime reduction  such that
\begin{itemize}
    \item
$|x| = |f(x)|$ for any $x \in \Sigma_D^\ast$;
    \item
there exists a polynomial $d$ such that for each $x \in \Sigma_D^\ast$
$$\nu_{|x|}(f(x)) \ge \frac{\mu_{|x|}(x)}{d(|x|)}.$$
\end{itemize}
Then $f$ is called a {\em CM-reduction}.
\end{definition}

Figure \ref{fi:measure_reduction} depicts the situation under consideration.

\begin{figure}[htbp]
\centerline{
\includegraphics[scale=0.6]{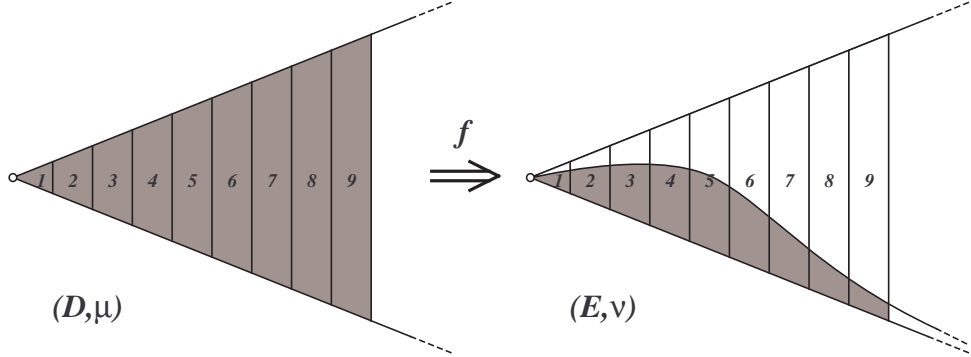} }
\caption{\label{fi:measure_reduction} Scheme of a CM-reduction. A
function $f$ maps a distributional decision problem $(D,\mu)$
into a distributional decision problem $(E,\nu)$ so that the $i$th
sphere in $D$ is mapped exactly into the $i$th sphere in $E$. The
grey part of $E$ depicts the image of $D$.}
\end{figure}

\begin{theorem}[CM-reduction is an SGPtime reduction]\label{le:CM}
Let $(D,\mu), (E,\nu) \in \DistNP$ and
$D   \stackrel{f}{\rightarrow} E$ a CM-reduction. Then the following holds.
\begin{itemize}
\item[(a)]
If $(E,\nu)$ is decidable by a TM $A$ within a  generic polynomial time
bound  $p$ and $\CC_{A,p} = o(1/d(k))$ (where $d(k)$ is the function from Definition \ref{de:CM-red}) then $A \circ f$ GPtime decides $(D,\mu)$.
\item[(b)]
$f$ is an SGPtime reduction.
\end{itemize}
\end{theorem}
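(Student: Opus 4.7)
The plan is to show in both parts that the composite algorithm $A \circ f$ is the desired witness, by pulling the ``bad'' set of $A$ on $(E,\nu)$ back through $f$ and bounding its $\mu_n$-measure via the pointwise inequality of Definition \ref{de:CM-red}. Since $f$ is length-preserving, a Karp reduction, and computable in some polynomial time $T_f(n)$, the composition $A \circ f$ partially decides $D$ correctly; the only thing to control is its running time on ``most'' inputs. Set the candidate time bound to $q(n) = p(n) + T_f(n)$. For $|x| = n$, if $T_{A \circ f}(x) > q(n)$ then $T_A(f(x)) > p(n)$, so with $B_n := \{y \in \Sigma_E^n : T_A(y) > p(n)\}$ one has
\begin{equation*}
\{x \in \Sigma_D^n : T_{A \circ f}(x) > q(n)\} \ \subseteq \ f^{-1}(B_n) \cap \Sigma_D^n.
\end{equation*}

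The crucial estimate is
\begin{equation*}
\mu_n\bigl(f^{-1}(B_n)\bigr) \ \le \ d(n) \, \nu_n(B_n),
\end{equation*}
obtained by summing the pointwise bound $\mu_n(x) \le d(n) \nu_n(f(x))$ over $x \in f^{-1}(B_n)$ and collecting terms by image value. Granting this, part (a) follows immediately: from $\CC_{A,p}(n) = \nu_n(B_n) = o(1/d(n))$ one obtains $\mu_n(f^{-1}(B_n)) = o(1)$, so $A \circ f$ GPtime decides $(D,\mu)$ with bound $q$. For part (b), fix an arbitrary polynomial $r(n)$; applying the SGPtime hypothesis for $A$ to the polynomial $d(n) r(n)$ yields $\nu_n(B_n) = o(1/(d(n) r(n)))$, whence $\mu_n(f^{-1}(B_n)) \le d(n) \nu_n(B_n) = o(1/r(n))$. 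Since $r$ was arbitrary, $A \circ f$ SGPtime decides $(D,\mu)$ with bound $q$, i.e., $f$ is an SGPtime reduction.

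The step that will need the most care is the key estimate: the pointwise hypothesis only yields $\mu_n(f^{-1}(B_n)) \le d(n) \sum_{x \in f^{-1}(B_n)} \nu_n(f(x))$, and collapsing the right-hand side to $d(n) \nu_n(B_n)$ uses that $f$ restricts to an injection on each sphere (or, equivalently, reinterprets Definition \ref{de:CM-red} as a domination of preimage measures, which is the standard Levin--Gurevich convention). Everything else is routine bookkeeping with polynomial bounds and the definitions of (strong) generic polynomial-time decidability.
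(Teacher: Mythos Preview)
Your argument is correct and follows essentially the same route as the paper: set the time bound $q=p+T_f$, pull back the bad set $B_n=\{y:T_A(y)>p(n)\}$ through $f$, bound $\mu_n(f^{-1}(B_n))\le d(n)\,\nu_n(B_n)$ via the domination inequality, and then read off (a) from $\CC_{A,p}=o(1/d)$ and (b) by applying the SGPtime hypothesis to the polynomial $d\cdot r$. The injectivity issue you flag is real and the paper handles it no differently---its chain of inequalities passes from $\sum_{x}\mu_k(x)$ to $d(k)\,\nu_k(f(\{x:\dots\}))$ in one step, which tacitly uses that $f$ is one-to-one on each sphere (as is the case in all the CM-reductions actually constructed later in the paper); your explicit acknowledgment of this point is, if anything, more careful than the original.
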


\begin{proof}
(a) Let  $A$ be an algorithm that generically decides $(E,\nu)$ within
a polynomial time upper bound  $p(m)$. Then $A \circ f$  is a partial decision algorithm for $(D,\mu)$. Recall, that $f$ preserves the size.
Therefore, $A \circ f$ decides $D$ within the polynomial time bound $p + T_f$ everywhere, except, maybe, a subset
    $$\{x \in D \mid T_A(f(x)) > p(|f(x)|) = p(|x|)\}.$$
To prove the statement it suffices to show that the set above is generic in $(\Sigma_D^\ast, \mu)$.
\begin{align*}
& \mu_k\{ x \in D \mid T_A(f(x)) > p(|f(x)|),~ |x|=k\} \\
\le\;\; & \nu_k\{f(x) \in E \mid T_A(f(x)) > p(|x|),~ |x|=k \} d(k) \\
\le\;\; & \nu_k\{y \in E \mid T_A(y) > p(|y|),~ |y|=k \} d(k)\\
 =\;\; &\CC_{A,p}(k) d(k) = o(1).
\end{align*}
(b) If an algorithm $A$ SGPtime decides $(E,\nu)$ then $\CC_{A,p} = o(1/q(k))$
for any polynomial $q$. Therefore, by part 1), $\CC_{A \circ f,p+T_f} \le o(d(k)/q(k))$ for every polynomial $q$. In particular, for $q = dq^\prime$ one has
$$\CC_{A \circ f,p+T_f} \le o(d(k)/q(k)) = o(1/q^\prime(k))$$
for any polynomial $q^\prime$, as required.
\end{proof}

\subsection{Reduction to a problem with the binary alphabet}
\label{subsec:binary}

In this section we show that each $\DistNP$ problem over a finite alphabet
$\Sigma$ can be reduced to a $\DistNP$ problem over a binary alphabet
$\{0,1\}$.

\begin{theorem}\label{th:bin_alph}
Let $(D,\mu)$ be an $\DistNP$ problem over a finite alphabet $\Sigma$. Then
there exists a $\DistNP$ problem $(E,\nu)$ over the binary alphabet
$\{0,1\}$ and a CS-reduction $D  \stackrel{f}{\rightarrow} E$ with linear size
function $\CS_f$.
\end{theorem}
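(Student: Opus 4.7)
The plan is to binary-encode each letter of $\Sigma$ by a fixed-length block and push both $D^+$ and $\mu$ forward along the resulting monoid homomorphism. Set $k := \lceil \log_2 |\Sigma| \rceil$ and, using the fixed linear ordering $a_1 < \dots < a_{|\Sigma|}$ on $\Sigma$, fix the order-preserving injection $\sigma : \Sigma \to \{0,1\}^k$ sending $a_i$ to the $k$-bit binary representation of $i-1$. Extend $\sigma$ multiplicatively to a monoid homomorphism $f:\Sigma^\ast \to \{0,1\}^\ast$; then $|f(x)| = k|x|$, so $f$ is size-invariant with linear size growth $\CS_f(n) = kn$, and $f$ is computable in time $O(k|x|)$.

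Define $E^+ := f(D^+) \subseteq \{0,1\}^\ast$ and $E := (\{0,1\}^\ast, E^+)$. Since $f$ is injective, $x \in D^+$ if and only if $f(x) \in E^+$, so $f$ is a Ptime size-invariant reduction from $D$ to $E$. Let $\nu = \{\nu_n\}_{n\ge 1}$ be the $f$-transfer of $\mu$, so that $\nu_{km}(f(x)) = \mu_m(x)$ for $x\in\Sigma^m$ and $\nu_{km}$ vanishes off $f(\Sigma^m)$, while $\nu_n$ is the uniform distribution on $\{0,1\}^n$ whenever $k \nmid n$. By Definition \ref{de:CS}, once we show $(E,\nu)\in\DistNP$, $f$ is a CS-reduction with linear $\CS_f$ and the theorem follows.

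For $E\in\NP$: a string $y\in\{0,1\}^\ast$ belongs to $E^+$ if and only if $|y|$ is a multiple of $k$, every $k$-block of $y$ lies in $\sigma(\Sigma)$, and the decoded word $f^{-1}(y)\in\Sigma^\ast$ lies in $D^+$; the first two conditions are in $\P$ and the third is in $\NP$ by hypothesis. For Ptime computability of $\nu^\ast$ we split on $|y|$. If $k\nmid |y|$ then $\nu_{|y|}$ is uniform and $\nu^\ast_{|y|}(y)$ equals the shortlex rank of $y$ divided by $2^{|y|}$, manifestly in Ptime. If $|y|=km$, then because $\sigma$ is order-preserving the map $f:(\Sigma^m,<_{slex})\to(\{0,1\}^{km},<_{slex})$ is also order-preserving, hence $\{x\in\Sigma^m\mid f(x)<_{slex}y\}$ is an initial shortlex segment $\{x\in\Sigma^m\mid x<_{slex}x^{(y)}\}$ determined by the smallest $x^{(y)}\in\Sigma^m$ with $f(x^{(y)})\ge_{slex} y$ (set $x^{(y)}:=\top$ if no such $x$ exists, in which case the segment equals all of $\Sigma^m$). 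A greedy block-by-block scan of $y$ computes $x^{(y)}$ in time $O(|y|)$, and then
\[ \nu^\ast_{km}(y) \;=\; \sum_{x\in\Sigma^m,\ f(x)<_{slex}y}\mu_m(x) \;=\; \mu^\ast_m(x^{(y)}), \]
which is Ptime computable because $(D,\mu)\in\DistNP$ (with the convention $\mu^\ast_m(\top)=1$).

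The main technical point is the choice of an order-preserving $\sigma$, which is what makes the cumulative distribution $\nu^\ast$ reduce cleanly to $\mu^\ast$ rather than requiring summation over several shortlex intervals; any injective encoding would still work but would lengthen the bookkeeping. The remaining verifications, that each $\nu_n$ is a probability measure (immediate from injectivity of $f$ and $\mu_m$ being a probability measure on the $f(\Sigma^m)$-slice plus uniformity on all other spheres), that $f$ correctly transports the yes-sets, and that Theorem \ref{le:CS} may then be invoked to upgrade $f$ to a $\GP$/$\SGP$ reduction, are direct from the definitions.
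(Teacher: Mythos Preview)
Your argument is correct for $|\Sigma|\ge 2$ and follows the same overall strategy as the paper: pick an order-preserving injection of $\Sigma^\ast$ into $\{0,1\}^\ast$, push the measure forward via the $f$-transfer, and use order preservation to reduce the computation of $\nu^\ast$ to that of $\mu^\ast$. The concrete encoding differs: you use the fixed block-length monoid homomorphism $\sigma:\Sigma\to\{0,1\}^k$ with $\CS_f(m)=km$, whereas the paper sends the $j$th element of $\Sigma^m$ (in lex order) to the $j$th element of $\{0,1\}^{\lceil m\log_2|\Sigma|\rceil}$, giving the slightly tighter $\CS_f(m)=\lceil m\log_2|\Sigma|\rceil$. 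Both maps are order-preserving on each sphere, which is the point that makes $\nu^\ast$ Ptime; your write-up actually spells this verification out, while the paper simply asserts $(E,\nu)\in\DistNP$. Your homomorphism is arguably cleaner to describe, while the paper's rank map packs the image into an initial lex segment, which makes the boundary $x^{(y)}$ a pure base-conversion rather than the block-by-block ceiling computation you sketch.

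One genuine (if minor) gap: your formula $k=\lceil\log_2|\Sigma|\rceil$ gives $k=0$ when $|\Sigma|=1$, so $f$ collapses everything to $\varepsilon$ and is no longer size-invariant. The paper handles the unary case separately via $a\mapsto 0$. Either add that case, or set $k:=\max(1,\lceil\log_2|\Sigma|\rceil)$.
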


\begin{proof}
Suppose that $\Sigma = \{a\}$ is an one-letter  alphabet.
Let $f:\{a\}^\ast \to \{0,1\}^\ast$ be a monoid homomorphism defined by $f(a) = 0$.
Put $E^+ = f(D^+)$ and $E = (\{0,1\}^\ast, E^+)$.
Define a spherical  ensemble of measures $\nu$ on
$\{0,1\}^\ast$ to be $$\nu_{|y|}(y) = \sum_{f(x)=y} \mu_{|x|}(x).$$
 Clearly,
$(E,\nu) \in \DistNP$ and $f$ is a
CS-reduction with linear size-growth function $\CS_f$.
By Theorem  \ref{le:CS}, $f$ is an SGPtime reduction.

Suppose that $|\Sigma| = n$, where $n \ge 3$. Define a function $f$ as
follows. Put $f(\varepsilon) = \varepsilon$ and, if $|x| \ge 1$ and $x$
is the $k$th element in $\Sigma^n$ (in the lexicographical order), then
$f$ maps $x$ into the $k$th element of $\{0,1\}^{\lceil |x| \log_2 n
\rceil}$. As above, we put $E^+ = f(D^+)$.
Let $\nu$ be the  $f$-transfer of $\mu$.
The problem $(E,\nu)$ belongs to
$\DistNP$ because $(D,\mu) \in \DistNP$ and $f$ is a Ptime reduction.   Clearly, $f$ is
a CS-reduction with a linear
size-growth function $\CS_f(i) = \lceil i \log_2 n \rceil$.
By Theorem  \ref{le:CS}, $f$ is an SGPtime reduction.
\end{proof}

\subsection{On restrictions of problems}
\label{subsec:restrictions}

Let $D = (\Sigma_D^\ast,D^+)$ be a problem and $S \subseteq \Sigma_D^\ast$.  In this section we consider the restriction  $D_S$ of $D$ to the subset $S$. Intuitively,  $D_S$  is the same problem as $D$, only the set of inputs is restricted to $S$.  The most natural formalization of $D_S$ would be $(S,D^+ \cap S)$, allowing the domain $S$ not equal to $\Sigma_D^\ast$, contrary to our assumption on algorithmic problems. In this case one can stratify the domain $I$ as a union $I = \cup_{n = 0}^\infty I_n$, where $I_n = I \cap \Sigma^n$, and leave only those $I_n$ that are non-empty. Then, one can obtain an ensemble of measures $\mu^\prime = \{\mu_n^\prime\}_{n = 0}^\infty$ on $I$ relative to the stratification above, where  $\mu_n^\prime$  is  the  measure  on $I_n$ induced by $\mu_n$. After that, the theory of distributional problems of this type can be developed similarly to the one already considered. However, it is a bit awkward and heavier in notation. We choose another way around this problem -- we change the ensemble of measures, but do not change the input space.

 Let $(D,\mu)$
be a distributional problem. For a subset $S\subseteq \Sigma_D^\ast$ consider
the   ensemble of probability measures
$\mu^S$ on $\Sigma_D^\ast$ $S$-induced by $\mu$ (see Section \ref{subsec:dist}).
The distributional problem $(D,\mu^S)$ is called the {\em restriction} of the distributional problem $(D,\mu)$ to the subset $S$.

\begin{lemma}
\label{le:comp-induced}
Let  $(D,\mu)  \in \DistNP$ and $S \subseteq \Sigma_D^\ast$.  If the function
$n \to \mu_n(S \cap \Sigma_D^n)$ is Ptime computable then $(D,\mu^S) \in \DistNP$.
\end{lemma}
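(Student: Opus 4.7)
The plan is to verify the two defining clauses of $\DistNP$ for $(D,\mu^S)$. The first clause, $D \in \NP$, is immediate from $(D,\mu)\in\DistNP$, since the construction only changes the ensemble of measures and leaves the yes-set and input alphabet untouched. All the substantive work lies in the second clause: that the cumulative ensemble $(\mu^S)^*$ is Ptime computable.

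First I would unpack $(\mu^S)^*_n(x)=\sum_{y<_{slex}x,\ |y|=n}\mu^S_n(y)$ using formula~(\ref{eq:induced_S}), and branch on whether the normalizing factor $\mu_n(S\cap\Sigma^n)$ vanishes---a test that is Ptime by the stated hypothesis on $n\mapsto\mu_n(S\cap\Sigma^n)$. In the degenerate subcase $\mu_n(S\cap\Sigma^n)=0$, formula~(\ref{eq:induced_S}) gives $\mu^S_n=\mu_n$ on $\Sigma^n$, so $(\mu^S)^*_n(x)=\mu^*_n(x)$, which is Ptime computable by $(D,\mu)\in\DistNP$. In the nondegenerate subcase one obtains the closed form
\[
(\mu^S)^*_n(x) \;=\; \frac{\mu_n\bigl(\{y\in S\cap\Sigma^n : y<_{slex}x\}\bigr)}{\mu_n(S\cap\Sigma^n)},
\]
in which the denominator is Ptime by hypothesis.

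The main obstacle is then to produce a Ptime approximation of the numerator. My strategy here is to mimic the argument of Lemma~\ref{le:S-induced}(b) that makes $\mu^S$ itself Ptime computable, but now applied to the cumulative quantity: given a precision target $2^{-k}$, I would combine the Ptime approximation of $\mu^*_n$ supplied by $(D,\mu)\in\DistNP$ with the hypothesized Ptime approximation of $\mu_n(S\cap\Sigma^n)$, using Lemma~\ref{le:S-induced}(a) to access $\mu$ pointwise. Standard error-propagation bounds for quotients of Ptime-approximable reals then deliver a $2^{-k}$ approximation of $(\mu^S)^*_n(x)$ in time polynomial in $n+k$. Assembling the two subcases into a single algorithm and combining with $D\in\NP$ yields $(D,\mu^S)\in\DistNP$.
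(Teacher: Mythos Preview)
Your argument is far more explicit than the paper's, which consists of the single sentence ``Follows immediately from Lemma~\ref{le:S-induced}.'' You correctly pinpoint that the whole content lies in showing $(\mu^S)^*$ is Ptime computable, and you correctly isolate the numerator $\mu_n\bigl(\{y\in S\cap\Sigma^n : y<_{slex}x\}\bigr)$ as the only nontrivial piece in the nondegenerate case.

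The gap is precisely there. The three ingredients you propose to combine---the cumulative $\mu^*_n(x)$, the total mass $\mu_n(S\cap\Sigma^n)$, and pointwise values of $\mu_n$---do not determine that numerator. The quantity $\mu^*_n(x)$ sums $\mu_n(y)$ over \emph{all} $y<_{slex}x$, not only those in $S$; the total $S$-mass on the sphere carries no information about how that mass is distributed along the shortlex order; and pointwise access to $\mu_n$ is useless because there are exponentially many $y$ below $x$. In fact, under the hypotheses as written the numerator need not be Ptime computable at all: take $\mu_n$ uniform on $\{0,1\}^n$ and let $S\cap\Sigma^n$ be the shortlex-lower half of $\Sigma^n$ when $n\in A$ and the upper half when $n\notin A$, for some $A\subseteq\MN$ not decidable in time polynomial in $n$; then $\mu_n(S\cap\Sigma^n)=\tfrac12$ for every $n$, yet $(\mu^S)^*_n(10^{n-1})\in\{0,1\}$ encodes membership in $A$. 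The paper's one-line proof shares this lacuna, since Lemma~\ref{le:S-induced}(b) only yields Ptime computability of $\mu^S$ pointwise, not of the cumulative $(\mu^S)^*$ that the definition of $\DistNP$ requires. In the paper's sole application (Proposition~\ref{pr:restriction2BHMg}) the set $S=C(g)$ is explicit enough that the cumulative distribution can be computed directly, but as an abstract lemma a stronger hypothesis---for instance Ptime computability of $x\mapsto \mu_{|x|}(\{y\in S:y\le_{slex}x,\ |y|=|x|\})$---appears to be needed.
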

  \begin{proof}
  Follows immediately from Lemma \ref{le:S-induced}.
\end{proof}

\begin{lemma}
\label{le:GP-S-induced}
Let $(D,\mu) \in \DistNP$, $S \subseteq \Sigma_D^\ast$,  and $(D,\mu^S) \in \DistNP$.
If an algorithm $A$ GPtime
decides $(D,\mu)$ with a control sequence $q_i$ such that  the sequence
$$
c_i= \left\{
\begin{array}{ll}
q_i/\mu_i(S \cap \Sigma_D^i), & \mbox{if } \mu_i(S \cap \Sigma_D^i)\ne 0;\\
q_i, & \mbox{if } \mu_i(S \cap \Sigma_D^i)=0.\\
\end{array}
\right.
$$
converges to $0$, then $A$ GPtime decides $(D,\mu^S)$ with the control
sequence bounded from above by  $\{c_i\}_{i=1}^\infty$.
\end{lemma}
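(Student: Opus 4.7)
The plan is a direct unpacking of the definition of the $S$-induced ensemble. Let $p$ be the polynomial that bounds $T_A$ on a set of $\mu$-density tending to $1$, and let
$$B_i = \{x \in \Sigma_D^i \mid T_A(x) > p(i)\},$$
so by hypothesis $\mu_i(B_i) = q_i$. Since $A$ and $p$ do not depend on the ensemble, $A$ is still a partial decision algorithm for $D$ with the same polynomial time bound $p$; the only thing to verify is that the $\mu^S$-measure of $B_i$ tends to $0$, and in fact is bounded by $c_i$.

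First I would split into the two cases appearing in the definition of $\mu^S$. When $\mu_i(S \cap \Sigma_D^i) \ne 0$, apply equation (\ref{eq:induced_S}) termwise:
$$\mu^S_i(B_i) = \sum_{x \in B_i} \frac{\mu_i(\{x\} \cap S)}{\mu_i(S \cap \Sigma_D^i)} = \frac{\mu_i(B_i \cap S)}{\mu_i(S \cap \Sigma_D^i)} \le \frac{\mu_i(B_i)}{\mu_i(S \cap \Sigma_D^i)} = \frac{q_i}{\mu_i(S \cap \Sigma_D^i)} = c_i.$$
When $\mu_i(S \cap \Sigma_D^i) = 0$, the definition of $\mu^S$ gives $\mu^S_i(x) = \mu_i(x)$ for every $x \in \Sigma_D^i$, so
$$\mu^S_i(B_i) = \mu_i(B_i) = q_i = c_i.$$

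In either case $\mu^S_i(B_i) \le c_i$, and the hypothesis $c_i \to 0$ then certifies that $A$ GPtime decides $(D,\mu^S)$ with control sequence bounded above by $\{c_i\}_{i=1}^\infty$. There is no real obstacle here; the only mild subtlety is remembering to handle the degenerate stratum $\mu_i(S \cap \Sigma_D^i) = 0$ separately, which is already built into the piecewise definition of both $\mu^S$ and the sequence $c_i$.
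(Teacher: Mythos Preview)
Your proof is correct and is essentially identical to the paper's: both define the ``bad'' set $B_i=\{x\in\Sigma_D^i\mid T_A(x)>p(i)\}$ (the paper calls it $F_i$), split according to whether $\mu_i(S\cap\Sigma_D^i)$ vanishes, and in the nontrivial case bound $\mu_i^S(B_i)=\mu_i(B_i\cap S)/\mu_i(S\cap\Sigma_D^i)\le q_i/\mu_i(S\cap\Sigma_D^i)=c_i$. There is no meaningful difference in approach.
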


\begin{proof}
Let $p(n)$ be a generic polynomial time upper bound of the algorithm $A$ and
$F = \{x\in\Sigma_D^\ast \mid T_A(x)>p(|x|)\}$.
Set $S_i = S \cap \Sigma_D^i, F_i = F \cap \Sigma_D^i$. Then
 $q_i = \mu_i(F_i)$ and for $i\in\MN$ one has
    $$\mu_i^S(F_i) =
    \left\{
    \begin{array}{ll}
    \frac{\mu_i(F_i\cap S_i)}{\mu_i(S_i)}, & \mbox{if } \mu_i(S_i)\ne 0;\\
    \mu_i(F_i), & \mbox{if } \mu_i(S_i)= 0.\\
    \end{array}
    \right.
    $$
Hence
    $$\mu_i^S(F_i) \le
    \left\{
    \begin{array}{ll}
    \frac{q_i}{\mu_i(S_i)}, & \mbox{if } \mu_i(S_i)\ne 0;\\
    q_i, & \mbox{if } \mu_i(S_i)= 0.\\
    \end{array}
    \right.
    $$
Thus, the sequence $\left\{\mu_i^S(F_i)\right\}_{i=0}^\infty$ converges to $0$.
\end{proof}

\begin{corollary}\label{co:restriction}
Let $(D,\mu) \in \DistNP$, $S \subseteq \Sigma_D^\ast$,  and $(D,\mu^S) \in \DistNP$.
If there exists a polynomial $d$ such that $\mu_i(S \cap \Sigma_D^i) \ge 1/d(i)$ for $\mu_i(S \cap \Sigma_D^i) \neq 0$
then the identity function $id:\Sigma_D^\ast \hookrightarrow \Sigma_D^\ast$ gives an SGPtime
reduction
$$(D,\mu^S) \stackrel{id}{\to} (D,\mu).$$

\end{corollary}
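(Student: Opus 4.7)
The plan is to verify that the identity function $id:\Sigma_D^\ast \to \Sigma_D^\ast$ satisfies condition \textbf{(R2)} in the definition of an SGPtime reduction: for every TM $A$ that SGPtime decides $(D,\mu)$, the composition $A \circ id = A$ also SGPtime decides $(D,\mu^S)$. Since $id$ is trivially polynomial time computable, size-invariant, and preserves the ``yes'' set, the only nontrivial content is tracking how the failure probability of $A$ transfers from $\mu$ to $\mu^S$. For this I would exploit the explicit inequality established in the proof of Lemma \ref{le:GP-S-induced}, combined with the new hypothesis that $\mu_i(S \cap \Sigma_D^i) \ge 1/d(i)$ whenever it is nonzero.

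More precisely, fix a polynomial time bound $p$ for $A$ on $(D,\mu)$, and let $q_i = \CC_{A,p}(i)$ be the associated control sequence. Write $S_i = S \cap \Sigma_D^i$ and $F_i = \{x \in \Sigma_D^i \mid T_A(x) > p(i)\}$. The calculation carried out in the proof of Lemma \ref{le:GP-S-induced} gives $\mu_i^S(F_i) \le q_i/\mu_i(S_i)$ when $\mu_i(S_i) \ne 0$, and $\mu_i^S(F_i) \le q_i$ otherwise. After replacing $d$ by $d+1$ if necessary, I may assume $d(i) \ge 1$ for all $i$, so both cases collapse to the uniform bound
$$\mu_i^S(F_i) \le q_i \cdot d(i).$$

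To conclude, I would show that this bound decays faster than every polynomial reciprocal. Given any polynomial $r$, the product $r \cdot d$ is again a polynomial, and by the SGPtime assumption on $A$ one has $q_i = o(1/(r(i)d(i)))$, hence $q_i d(i) = o(1/r(i))$. Therefore the control sequence of $A$ on $(D,\mu^S)$ satisfies the strong generic decay required, and $A$ SGPtime decides $(D,\mu^S)$. The only mildly delicate point is the case split according to whether $\mu_i(S_i) = 0$, which is absorbed by the inequality above; aside from this bookkeeping the corollary is essentially a direct corollary of Lemma \ref{le:GP-S-induced} together with the polynomial lower bound on $\mu_i(S_i)$, so I do not expect any serious obstacle.
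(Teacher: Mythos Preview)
Your proposal is correct and follows exactly the route the paper intends: the corollary is stated without proof precisely because it is an immediate consequence of the inequality from Lemma~\ref{le:GP-S-induced} together with the polynomial lower bound $\mu_i(S_i)\ge 1/d(i)$, and your argument spells this out cleanly (the same ``multiply the control sequence by a polynomial and absorb it into the SGPtime hypothesis'' trick appears in the proof of Theorem~\ref{le:CM}(b)).
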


\begin{remark}
We would like to emphasize that
the situation with restrictions of problems in $\GP$ is quite different from the ``average-case'' one,
where almost any  restriction preserves the property of being polynomial time computable on average.
\end{remark}

\section{Distributional bounded  halting problem}
\label{se:DistHP}

In this section we, following \cite{Gurevich:1991}, define the distributional  bounded halting problem
and prove that it is SGPtime complete in $\DistNP$.

Let $M$ be a nondeterministic Turing machine with the binary tape
alphabet $\Sigma = \{0,1\}$. Intuitively, the bounded halting problem for $M$
is the following algorithmic question:
\begin{quote}
For a positive integer $n$ and a binary string $w$ such that $|w|
< n$ decide if there is a halting computation for $M$ on $w$ in
at most  $n$ steps.
\end{quote}
By our definitions (see Section \ref{se:prelim}) instances  of algorithmic problems are words (not pairs of words) in some alphabet, so to this end we encode a pair $(n,w)$ by the binary string $c(n,w) = 1^m 0w$ such that $n = |1^m 0 w|$. Notice, that  any binary string
containing  $0$ is the code $c(n,w)$ for some $n, w$. Denote by  $BH(M)^+$  the subset of all binary strings $c(n,w)$, where $n \in \mathbb{N}, w \in \Sigma^\ast$,   such that $M$ halts on $w$ within $n$ steps. From now on we refer to the problem $BH(M) = (\Sigma^\ast, BH(M)^+)$ as the {\em bounded halting problem}.

To turn $BH(M)$ into a distributional
 problem we introduce a spherical  ensemble $\nu = \{\nu_n\}_{n = 1}^\infty$ of probability measures as follows. For $u\in \{0,1\}^\ast$
 put
    $$\nu_{|u|}(u) =
    \left\{
    \begin{array}{ll}
    \frac{1}{|u| 2^{|w|}}, & \mbox{if } u=1^m0w;\\
    1, & \mbox{if } u = \varepsilon;\\
    0, & \mbox{if $u = 1^k$ for some $k\ge 1$}.\\
    \end{array}
    \right.
    $$
The problem $(BH(M), \nu)$ is the {\em distributional} halting problem for $M$, we refer to it as $DBH(M)$.

\begin{figure}[htbp]
\centerline{ \includegraphics[scale=0.5]{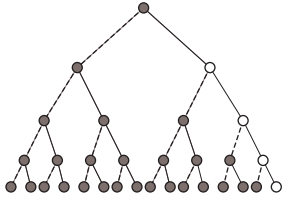} }
\caption{\label{fi:space} Probability space for halting problem (a
finite part). Dashed lines correspond to $0$, solid lines
correspond to $1$. Each dot is associated with the label of the
path from the root to itself. Grey dots have non-trivial measure.}
\end{figure}

A positive integer $n$ is called {\em longevous} for an input $w$
of an NTM $M$ if every halting computation of $M$ on $w$ has at
most $n$ steps. A function $g(n)$ is a {\em longevity guard} for
$M$ if for every input $w$ the number $g(|w|)$ is longevous for
$w$. Notice, that if $g$ is a longevity guard for $M$, then any
function $h\ge g$ is also a longevity guard for $M$. In what follows, we always assume that
a longevity guard satisfies the following conditions:
\begin{itemize}
\item[{\bf(L1)}]
$g(|w|) \ge |w|$;
\item[{\bf(L2)}]
$g(|w|)$ is strictly increasing.
\end{itemize}

\begin{remark}
\label{re:longevity}
For every problem $D \in \NP$, there is an NTM   $D$ that decides $D$ and has a
polynomial longevity guard $g(n)$, satisfying the conditions (1), (2) above.
\end{remark}
  Since $M$ halts on an input $w$ if and only if it halts on $w$ within $g(|w|)$ steps, there is no much use to consider instances $(n,w)$ of the halting problem for $M$ with $n > g(|w|)$. A rigorous formalization of this observation is to restrict the problem $DBH(M)$ to the subset of instances
$$
C(g) = \{ c(g(|w|),w) \mid  w\in \Sigma^\ast \}.
$$

More generally, for a computable function $g(n)$, satisfying conditions (1) and (2), consider the set $C(g)$ as above and denote by $\nu = \nu(g)$ the ensemble of measures for $\Sigma^\ast$ which is $C(g)$-induced by $\nu$. Let  $DBH(M,g) = (BH(M), \nu_g)$ be the restriction of the problem $DBH(M)$ to $C(g)$.

\begin{proposition}\label{pr:restriction2BHMg}
Let $M$ be an NTM  and $g:\MN \to \MN$ a polynomial function. Then
$DBH(M,g) \in \DistNP$  and the identity function $id : \Sigma^\ast \to \Sigma^\ast$ gives an  SGPtime reduction
$DBH(M,g) \stackrel{id}{\to} DBH(M)$.
 \end{proposition}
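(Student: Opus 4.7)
The plan is to reduce both assertions to the machinery already developed: Lemma \ref{le:comp-induced} gives membership in $\DistNP$ once the $C(g)$-induced measure is shown to be Ptime computable, and Corollary \ref{co:restriction} delivers the SGPtime reduction once the measure of the restriction set is bounded below by the reciprocal of a polynomial on its support. The bulk of the work is the explicit calculation of $\nu_n(C(g)\cap\Sigma^n)$.

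First I would compute this measure directly. Since $g$ is strictly increasing by (L2), for each $n\in\MN$ there is at most one $k$ with $g(k)=n$. When such $k$ exists, the encoding condition (using (L1)) yields
$$C(g)\cap\Sigma^n = \{\,1^{n-k-1}0w \mid w\in\Sigma^k\,\},$$
a set of size $2^k$, each element of $\nu_n$-measure $\frac{1}{n\cdot 2^k}$. Hence $\nu_n(C(g)\cap\Sigma^n)=\frac{1}{n}$. When $n$ is not in the image of $g$, the set is empty, so the measure is $0$. Thus on its support, $\nu_n(C(g)\cap\Sigma^n)=\frac{1}{n}$.

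Next, for membership in $\DistNP$, I would first note that $DBH(M)\in\DistNP$ (because $BH(M)\in\NP$ via the standard nondeterministic $n$-step simulation, and $\nu^\ast$ is Ptime computable by a telescoping sum of the explicit formula for $\nu_n$). Given this, I would apply Lemma \ref{le:comp-induced}: the function $n\mapsto\nu_n(C(g)\cap\Sigma^n)$ is Ptime computable because $g$ is polynomial and $g(k)\ge k$, so one can decide membership in the image of $g$ by evaluating $g(0),g(1),\ldots$ until exceeding $n$, then output $\frac{1}{n}$ or $0$. This yields $DBH(M,g)=(BH(M),\nu^{C(g)})\in\DistNP$.

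Finally, for the SGPtime reduction I would invoke Corollary \ref{co:restriction} with $S=C(g)$ and $d(n)=n$: the bound $\nu_n(C(g)\cap\Sigma^n)\ge 1/d(n)$ holds precisely when the left side is nonzero, so the identity map gives an SGPtime reduction $DBH(M,g)\stackrel{id}{\to}DBH(M)$. I do not expect any genuine obstacle here; the only care-demanding step is the explicit enumeration showing $\nu_n(C(g)\cap\Sigma^n)=1/n$, after which both claims follow by quoting the already-established lemmas.
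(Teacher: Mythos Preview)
Your proposal is correct and follows essentially the same route as the paper's own proof: compute $\nu_n(C(g)\cap\Sigma^n)$ explicitly to obtain $1/n$ when $n$ lies in the image of $g$ and $0$ otherwise, then invoke Lemma~\ref{le:comp-induced} for $\DistNP$-membership and Corollary~\ref{co:restriction} (with $d(n)=n$) for the SGPtime reduction. If anything, you are slightly more careful than the paper in explicitly recording that $DBH(M)\in\DistNP$ before applying Lemma~\ref{le:comp-induced}.
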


\begin{proof}
Observe first that the function $n \to \nu_n(C(g) \cap \Sigma^n)$ is Ptime computable. Indeed,
if $u = 1^m0w \in C(g) \cap \Sigma^n$, then $n = g(|w|)$, so $|w| = g^{-1}(n) = k$ is uniquely defined (since $g$ is monotone). In this case, $\nu_n(u) = \frac{1}{n 2^{k}}$ depends only on $n$, hence
 $$\nu_n(C(g) \cap \Sigma^n) = \frac{1}{n 2^{k}}|C(g) \cap \Sigma^n| = \frac{1}{n 2^{k}} 2^k = \frac{1}{n}.$$ Therefore,
\begin{equation}
\label{eq:nu-C(g)}
\nu_n(C(g) \cap \Sigma^n) =
    \left\{
    \begin{array}{ll}
    \frac{1}{n}, & \mbox{if } g^{-1}(n) \neq \emptyset;\\
    0, & \mbox{otherwise}.\\
    \end{array}
    \right.
\end{equation}
Since the function $g$ is polynomial it takes at most $O(ng(n))$ time  to check if  $g^{-1}(n) = \emptyset$ or not.
  Now, by Lemma \ref{le:comp-induced} $DBH(M,g) \in \DistNP$.
     Equalities \ref{eq:nu-C(g)} and Corollary \ref{co:restriction} imply that  $DBH(M,g) \stackrel{id}{\to} DBH(M)$ is an SGPtime reduction, as claimed.
\end{proof}


\begin{figure}[htbp]
\centerline{ \includegraphics[scale=0.5]{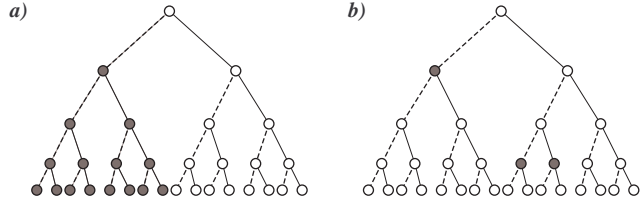} }
\caption{\label{fi:restriction} Examples of restricted problems
$BH(M,g)$, where a) $g(n) = n+1$ and b) $g(n) = 2n+1$. Only grey
dots have non-trivial probability.}
\end{figure}

In the proofs below,  we use the following
encoding of natural numbers and Turing machines. Let a string  $b = b_k \ldots b_0$ be a
 binary expansion for $n\in\mathbb{N}$, i.e., $n =
\sum_{i=0}^k b_i 2^i$, $b_k=1$, and $b_0, \ldots, b_{k-1} \in
\{0,1\}$. Denote by $\overline{n}$ the binary string
$$1 b_k 1 b_{k-1 } \ldots 1 b_0$$
obtained from $b$ by inserting $1$ in front of each
symbol in $b$.  Let $\gamma:M \to \gamma(M)$ be a polynomial time computable enumeration of (nondeterministic) Turing machines, such that $\gamma(M)$ is a binary representation of a natural number and every natural number is equal to $\gamma(M)$ for some $M$. Denote by $\overline{M}$ the string $\overline{\gamma(M)}$.

\begin{theorem}\label{th:red2BH}
For any $(D,\mu) \in \DistNP$ there exists an NTM $M$ over the  binary alphabet $\Sigma = \{0,1\}$, a
polynomial longevity guard $g$ for $M$, and an SGPtime reduction
$(D,\mu) \stackrel{f}{\rightarrow} DBH(M,g)$.
\end{theorem}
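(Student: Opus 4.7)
The plan is to carry out a generic-case analog of Gurevich's classical reduction of $\DistNP$ problems to the bounded halting problem.

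First I would reduce to the binary-alphabet case via Theorem \ref{th:bin_alph}: composing with the CS-reduction supplied there, we may assume $\Sigma_D = \{0,1\}$. Fix a Ptime NTM $N$ that decides $D$ with a polynomial longevity guard $p$. Using the Ptime computability of the cumulative ensemble $\mu^\ast$, I would build a Shannon--Fano style encoding $B\colon \Sigma^\ast \to \Sigma^\ast$: for $x\in \Sigma^n$, define $B(x)$ to be the shortest binary prefix whose half-open binary-fraction interval lies inside $[\mu_n^\ast(x),\mu_n^\ast(x^+))$. This makes $B$ Ptime computable, injective on each sphere, Ptime decodable given $|x|$ (via binary search on $\mu_n^\ast$), and guarantees $|B(x)| \le \lceil \log_2(1/\mu_n(x))\rceil + 2$ whenever $\mu_n(x) > 0$.

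Next, construct the target NTM $M$ to parse its input $y$ as a self-delimiting tuple $(\overline{k},z,\mathrm{pad})$, nondeterministically guess an $x\in\Sigma^k$, verify $B(x)=z$ via the Ptime decoder, and simulate $N$ on $x$, accepting iff $N$ accepts. A direct runtime analysis yields a polynomial longevity guard $g$ for $M$, which I may enlarge so that (L1) and (L2) hold. For $x\in\Sigma^k$, pick a polynomial $\ell(k)$ (large enough, to be fixed in the analysis) and let $y_x$ be the $\ell(k)$-bit self-delimiting encoding of $(\overline{k}, B(x))$ padded deterministically to length $\ell(k)$; for the exceptional $x$ whose compressed encoding overflows the budget, fall back to storing $x$ itself so that $|y_x|=\ell(k)$ in all cases and the encoded information still determines $x$ uniquely. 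Set $f(x):= c(g(\ell(k)), y_x)$. By construction $f(x) \in C(g)\cap \Sigma^{g(\ell(k))}$, $f$ is Ptime computable, size-invariant (since $|f(x)|=g(\ell(k))$ depends only on $|x|$ and $g\circ \ell$ is strictly increasing), and $x \in D^+ \iff f(x) \in DBH(M,g)^+$ by design of $M$.

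Finally, I would verify that $f$ is an SGPtime reduction. Given an $A$ that SGPtime decides $DBH(M,g)$ within a polynomial bound $\pi$, the goal is to show that for every polynomial $q$,
$$\mu_k\{x\in\Sigma^k : T_{A\circ f}(x) > \pi(g(\ell(k))) + T_f(k)\} = o(1/q(k)).$$
Since $\nu^{C(g)}_{g(\ell(k))}$ is the uniform distribution on $C(g)\cap\Sigma^{g(\ell(k))}$ with weight $2^{-\ell(k)}$ per atom, the bad set $F$ of $A$ in this sphere has cardinality $o(2^{\ell(k)}/q')$ for every polynomial $q'$. Decomposing $\Sigma^k$ by the value of $|B(x)|$, bounding $|\{x : |B(x)|=l\}|$ via the Kraft inequality for the prefix code $B$, and using $\mu_n(x) \le 2^{-|B(x)|+O(1)}$ on the well-compressed part, together with the strongly negligible $\mu$-mass of the direct-encoded outliers (ensured by the choice of $\ell$), one obtains the required super-polynomial decay. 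The polynomial losses incurred by $g$, $\ell$, $T_f$ and the factor $1/\nu_{g(\ell(k))}(C(g)\cap \Sigma^{g(\ell(k))}) = g(\ell(k))$ are all absorbed in the slack afforded by SGPtime.

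The main obstacle is precisely this measure-transfer step. The restricted target measure $\nu^{C(g)}$ is uniform across its sphere, whereas $\mu$ may be arbitrarily non-uniform, which rules out a clean CM-type bound valid simultaneously for all $x$. The resolution, mirroring Gurevich's argument, ties the encoding length of $x$ to $-\log\mu(x)$ via the Shannon--Fano code, aligning each atom's $\mu$-mass with a polynomial share of the uniform $\nu^{C(g)}$ mass at the corresponding image point; the combinatorial summation over compression levels then transfers the target's super-polynomial decay of the bad set into the analogous decay on the source.
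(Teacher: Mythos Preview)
Your overall architecture---reducing to binary, building a Shannon--Fano code from $\mu^\ast$, and constructing $M$ to decode and simulate---matches the paper's. The gap is in the final step, precisely the one you flag as ``the main obstacle.''

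By padding $y_x$ to a fixed length $\ell(k)$, you force every image $f(x)$ with $|x|=k$ to carry the \emph{same} $\nu^{C(g)}$-mass $2^{-\ell(k)}$. Your Kraft summation then cannot close the argument: an SGPtime decider $A$ for the target may fail on \emph{any} set of super-polynomially small $\nu^{C(g)}$-mass, and such a set can be chosen adversarially. Concretely, since your $M$ decodes $f(x)\mapsto x$ in polynomial time and $\mu$ is Ptime computable, one can build an $A$ that, on input $u$, recovers the preimage $x$ (if any), computes $\mu_{|x|}(x)$, and loops forever precisely when $\mu_{|x|}(x)>1/|x|$, otherwise running a baseline SGPtime decider. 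At most $k$ strings in $\Sigma^k$ satisfy $\mu_k(x)>1/k$, so the failure set of $A$ has $\nu^{C(g)}$-mass at most $k\cdot 2^{-\ell(k)}$, which is super-polynomially small; yet $A\circ f$ fails on $\{x:\mu_k(x)>1/k\}$, whose $\mu_k$-mass can be $1-o(1)$. Your bound $\sum_l \min(|F|,2^l)\cdot 2^{-l}$ only yields $O(\log|F|)$, not $o(1)$, and this is exactly the obstruction.

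The paper sidesteps this by \emph{not} padding the payload. It sets $f(x)=1^{g(|x|)-|0\overline n 0 x''|}\,0\,\overline n\,0\,x''$, so the unary padding lives in the $1^m$-prefix of the code $c(n,w)$, while the part $w=\overline n 0 x''$ after the first $0$ stays variable-length with $|x''|\approx -\log_2\mu_{|x|}(x)$. Because the unrestricted target measure satisfies $\nu_{|u|}(1^m0w)=1/(|u|\,2^{|w|})$, heavy $x$'s (short $x''$) now map to points of \emph{large} $\nu$-mass, and one obtains the pointwise CM bound $\nu_{|f(x)|}(f(x))\ge \mu_{|x|}(x)/(16\,|x|^2 g(|x|))$. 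That pointwise bound is what transfers the super-polynomial decay of the failure set back to $(D,\mu)$; no summation over compression levels is needed. In short: move your padding from $y_x$ into the $1^m$ prefix and work against $\nu$ rather than the uniform $\nu^{C(g)}$, and you recover a genuine CM-reduction.
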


\begin{proof}
Fix $(D,\mu) \in \DistNP$. We divide the proof of the theorem into two parts.
First, we construct a Turing machine $M$, a longevity guard $g$ for
$M$, and a Ptime reduction $f$ from the original problem $(D,\mu)$
to $BH(M,g)$. Then we show that $f$ is a composition of CS and CM
reductions defined in Sections \ref{se:change_size} and
\ref{se:change_measure} and, hence, $f$ is an SGPtime reduction $(D,\mu) \stackrel{f}{\rightarrow} DBH(M,g)$.

{\bf Part I.}
By Theorem \ref{th:bin_alph} we may assume that the alphabet of
$D$ is binary. Now, since $D\in\NP$ there exists an NTM $A_D$
such that:
\begin{itemize}
    \item
$A_D$ has a halting computation on an input $w$ if and only if $w\in D$;
    \item
$A_D$ has a polynomially bounded longevity guard.
\end{itemize}

Recall that $\mu^\ast$ is the ensemble of probability distributions
    $$\mu_{|x|}^\ast(x) = \mu_{|x|}\rb{\{y\in\Sigma^{|x|} \mid y<_{lex}x\}}.$$
Since $(D,\mu) \in \DistNP$ the ensemble $\mu^\ast = \{\mu_n^\ast\}$ is Ptime computable.
 For $x\in\{0,1\}^\ast$ define a
function
    $$
    \hat\mu_{|x|}(x) =
    \left\{
    \begin{array}{ll}
    \mu_{|x|}^\ast(x^+), & \mbox{if } x\ne 1^{|x|};\\
    1, & \mbox{if } x=1^{|x|};\\
    \end{array}
    \right.
    $$
where $x^+$ is the  lexicographic successor of $x$.
For $x\in\{0,1\}^\ast$ such that $\mu_{|x|}(x) > 2^{-|x|}$ define
$x^\prime = x_0 \ldots x_k$ to be the smallest (in shortlex ordering)  binary string such that
    $$\mu_{|x|}^\ast(x) < x_0.x_1 x_2 \ldots x_k 1 \le \hat\mu_{|x|}(x),$$
where $x_0.x_1 x_2 \ldots x_k$ is the binary expansion of a real number in the interval $[0,1]$.
One can describe $x^\prime$ as follows. Assume first that $\hat\mu_{|x|}(x) \neq 1$.
Since $\mu_{|x|}(x) > 2^{-|x|}$ the binary expansions of $\mu^{\ast}_{|x|}(x)$ and
$\hat\mu_{|x|}(x)$ differ within the first $|x|-1$ bits after ``.'', i.e.,
    $$\mu^{\ast}_{|x|}(x) = 0.x_1x_2 \ldots x_k 0 \ldots ~~\mbox{ and }~~ \hat\mu_{|x|}(x) = 0.x_1x_2 \ldots x_k 1 \ldots$$
for some $k \leq |x|-1$.
In this case  $x^\prime = 0x_1x_2 \ldots x_k$. The case $\hat\mu_{|x|}(x) = 1$ is similar.
It follows that for every $x\in\Sigma^\ast$ such that $\mu_{|x|}(x)>2^{-|x|}$ we have $|x'| \le |x|$  and
    $$x_0.x_1 x_2 \ldots x_k 1 - 2^{-|x'|}
    \le
    \mu_{|x|}^\ast(x)
    <
    \hat\mu_{|x|}(x)
    <
    x_0.x_1 x_2 \ldots x_k 1 + 2^{-|x'|}.$$
Hence, $\mu_{|x|}(x) < 2\cdot 2^{-|x'|}$.  Define
$$
x'' =
\left\{
\begin{array}{ll}
0x,  & \mbox{if } \mu_{|x|}(x) \le 2^{-|x|};\\
1x', & \mbox{if } \mu_{|x|}(x) > 2^{-|x|}.\\
\end{array}
\right.
$$
Notice that for every $x\in\Sigma^\ast$, $\mu_{|x|}(x) \le 4\cdot 2^{-|x''|}$ and
$|x''| \le |x|+1$.

Now we describe  an NTM $M$, a function $g$,  and a reduction $f:(D,\mu) \rightarrow BH(M,g)$.
If $M$ is defined and  $g$ is a polynomial longevity guard for $M$, then the reduction $f$
is defined for $x \in \Sigma^\ast$ by
\begin{equation}\label{eq:red2BH}
    f(x) = 1^{g(|x|)-|0\overline{n}0x''|}0 \overline{n} 0x''
\end{equation}
where $n = |x|$. It is left to define $M$ and $g$.

 The machine $M$ on a binary  input $u$ executes the following algorithm:
\begin{enumerate}
    \item[A.]
If $u$ is not in the form $1^k 0 \overline{n} 0bw$, where $b\in\Sigma$ and $w \in \Sigma^\ast$, then loop forever.
    \item[B.]
If $u = 1^k 0 \overline{n} 0bw$ then decode $n$. 
    \item[C.]
If $b = 0$:
\begin{enumerate}
    \item[(1)]
if $\mu_n(w) > 2^{-|w|}$ then loop forever;
    \item[(2)]
otherwise simulate $A_D$ on $w$.
\end{enumerate}
    \item[D.]
If $b=1$:
\begin{enumerate}
    \item[(1)]
find the lexicographic smallest $x\in \{0,1\}^n$ satisfying $\mu_n^\ast(x) < 0.w1 \le \hat\mu_n(x)$
using divide and conquer approach;
    \item[(2)]
if $\mu_n(x) \le 2^{-|x|} $ or $x' \ne w$ then loop forever;
    \item[(3)]
otherwise simulate $A_D$ on $x$.
\end{enumerate}
\end{enumerate}
By construction, $M$ has a halting computation on $u\in\Sigma^\ast$ if and only if $u = f(x)$ for some $x\in\Sigma^\ast$
and $x\in D^+$.

We claim that $M$ has a polynomial longevity guard $g$.
Indeed, since $D\in\NP$ it follows that an NTM $A_D$ has a polynomial longevity guard,
and all steps in the algorithm above, except simulation of $A_D$,
can be performed by deterministic polynomial time algorithms. Therefore, $M$
has a polynomial longevity guard $g$, as claimed.
In particular, $D \stackrel{f}{\to} BH(M,g)$ is a Ptime reduction, as claimed.

{\bf Part II.}  Now we  prove that $f$ is an SGPtime reduction. We start with the following lemma.

\begin{lemma}\label{le:measure_decrease}
For every $m\in\MN$ and $x\in\{0,1\}^\ast$ the following inequality holds:
\begin{equation}\label{eq:factor_rh_reduction}
    \nu_{|f(x)|}(f(x)) \ge \frac{1}{16 |x|^2 g(|x|)} \cdot \mu_{|x|}(x).
\end{equation}
\end{lemma}

\begin{proof}
For every $x\in\{0,1\}^\ast$ there are two possibilities.
If $\mu_{|x|}(x) \le 2^{-|x|}$, then $f(x)=1^{g(|x|)-|0\ovn00x|}0\ovn00x$ and its measure is:
\begin{align*}
\nu_{|f(x)|}(f(x)) & = \frac{1}{g(|x|) 2^{|x|+2\lceil\log_2 |x| \rceil +2}}  \\
 &\ge \frac{1}{g(|x|) 2^{|x|+2\log_2 |x|+3}} = \frac{1}{8|x|^2 g(|x|) 2^{|x|}} \\
&\ge \frac{1}{8 |x|^2 g(|x|)} \cdot \mu_{|x|}(x) \ge \frac{1}{8 |x|^2 g(|x|)} \cdot \mu_{|x|}(x).
\end{align*}
If $\mu_{|x|}(x) > 2^{-|x|}$, then $f(x) = 1^{g(|x|)-|0\ovn01x'|}0\ovn01x'$ and its measure is:
\begin{align*}
\nu_{|f(x)|}(f(x)) &= \frac{1}{g(|x|) 2^{|x'|+2\lceil\log_2 |x'| \rceil +2}} \\
&\ge \frac{1}{g(|x|) 2^{|x'|+2\log_2 |x'|+3}} = \frac{1}{8 |x'|^2 g(|x|) 2^{|x'|}} \\
&\ge \frac{1}{16 |x|^2 g(|x|)} \cdot \mu_{|x|}(x)
\end{align*}
since $|x'|\le |x|$ and $\mu_{|x|}(x) < 2\cdot 2^{-|x'|}$. In each case the inequality (\ref{eq:factor_rh_reduction}) holds.
\end{proof}

By construction of $g$, all elements of $\{0,1\}^n$ are mapped to
elements of size $g(n)$, hence, $f$ is size-invariant. It
follows that a function $f$ is a composition of a CS-reduction
with the polynomial size-growth function $\CS_f(n) = g(n)$ and a
CM-reduction with a polynomial density function
$\frac{1}{16|x|^2g(|x|)}$. Thus, $f$ is an SGPtime reduction.
\end{proof}

Let  $U$ be a universal NTM such that:
\begin{itemize}
\item[(a)]
$U$ accepts inputs of the form $\overline{M}0w$, where
$\overline{M}$ is the  encoding of an NTM $M$ over a binary alphabet and $w \in \Sigma^\ast$;
\item[(b)]
$U$ simulates $M$ on $w$, i.e., $M$ halts on $w$ if and only $U$
halts on $\overline{M}0w$, in which case they both have   the same  answer (the same final configurations);
\item[(c)]
$U$ has a polynomial-time slow-down, i.e., there exists a
polynomial function $s(k)$ such that $T_{M}(w) \ge
T_{U}(\overline{M}0w) / s(|w|)$.
\end{itemize}
See, for example,  \cite{NW} on how such a deterministic  Turing
machine $U$ can be constructed, a nondeterministic one  can be
constructed in a similar way.

\begin{theorem}\label{th:red2BHU}
For every NTM $M$ over a binary alphabet $\Sigma$, there exists a Ptime computable function $h:\MN \to \MN$ and an SGPtime reduction
$DBH(M) \stackrel{f}{\to} DBH(U,h)$.
\end{theorem}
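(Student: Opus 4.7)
The plan is to route the reduction through an auxiliary bounded-halting problem, invoking Theorem~\ref{th:red2BH} to normalize the source measure before encoding things for $U$. First apply Theorem~\ref{th:red2BH} to the $\DistNP$ problem $(DBH(M),\nu)$ to obtain an NTM $M_0$ over $\{0,1\}$, a polynomial longevity guard $g_0$ for $M_0$, and an SGPtime reduction $r_1\colon DBH(M)\to DBH(M_0,g_0)$. On $C(g_0)$ the induced measure collapses to the uniform form $\nu^{C(g_0)}(1^{m_0}0w)=1/2^{|w|}$, which matches the target measure on $C(h)$ up to a constant factor coming from the encoding of $M_0$.

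Next, choose the polynomial $h(k)=s(k)\,g_0(k)+k$, where $s$ is the polynomial slowdown of $U$ from property~(c), and define $r_2\colon DBH(M_0,g_0)\to DBH(U,h)$ by
\[
r_2(1^{m_0}0w)\;=\;1^{m_h}\,0\,\overline{M_0}\,0\,w,
\]
with $m_h$ uniquely determined by $|r_2(1^{m_0}0w)|=h(|\overline{M_0}|+1+|w|)$. Correctness is straightforward: $1^{m_0}0w\in BH(M_0,g_0)^+$ iff $M_0$ halts on $w$ (by longevity), iff $U$ halts on $\overline{M_0}\,0\,w$ by property~(b); and whenever $U$ halts, the bound $T_U(\overline{M_0}\,0\,w)\le s(|w|)T_{M_0}(w)\le s(|w|)g_0(|w|)\le h(|\overline{M_0}|+1+|w|)$ from property~(c) guarantees the halt occurs within the prescribed $h$-bound. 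The map is size-invariant because both $|u|=g_0(|w|)$ and $|r_2(u)|=h(|\overline{M_0}|+1+|w|)$ are strictly monotone in $|w|$.

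To verify that $r_2$ is SGPtime, decompose it as $r_2=r_{2,2}\circ r_{2,1}$. The padding map $r_{2,1}\colon 1^{m_0}0w\mapsto 1^{m''}0w$ at length $h(|\overline{M_0}|+1+|w|)$ is Ptime, size-invariant, and has polynomial size-growth function, so declaring the intermediate measure to be the $r_{2,1}$-transfer of $\nu^{C(g_0)}$ makes $r_{2,1}$ a CS-reduction to a problem in $\DistNP$; Theorem~\ref{le:CS} then yields that $r_{2,1}$ is SGPtime. The re-encoding map $r_{2,2}\colon 1^{m''}0w\mapsto 1^{m_h}0\overline{M_0}0w$ preserves size, and on the images of $r_{2,1}$ the measures satisfy
\[
\frac{\tilde\nu(1^{m''}0w)}{\nu^{C(h)}(1^{m_h}0\overline{M_0}0w)}
\;=\;\frac{1/2^{|w|}}{1/2^{|\overline{M_0}|+1+|w|}}
\;=\;2^{|\overline{M_0}|+1},
\]
a constant independent of $w$. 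Hence the CM density condition of Definition~\ref{de:CM-red} holds with constant polynomial $d=2^{|\overline{M_0}|+1}$, and Theorem~\ref{le:CM} gives that $r_{2,2}$ is SGPtime. By transitivity of SGPtime reductions, $f=r_2\circ r_1$ is the desired reduction.

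The main obstacle is controlling measure in the final step rather than defining the map. A direct reduction from $DBH(M)$ sending $1^m0w$ to a string of the form $1^{m_h}0\overline{M}0w$ (or encoding the whole $u$) cannot be a CM-reduction: the source weight $\nu_n(1^m0w)=1/(n\cdot 2^{|w|})$ is heavily non-uniform in $|w|$, and for some instances its ratio against any uniform-in-$|v|$ target measure is exponential in $n-|w|$. Interposing $r_1$ from Theorem~\ref{th:red2BH} absorbs this mismatch into the pre-processing, leaving only the constant-factor overhead $2^{|\overline{M_0}|+1}$ associated with the $\overline{M_0}$ prefix, which is exactly what the CM-step tolerates.
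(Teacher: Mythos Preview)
Your route differs from the paper's: the paper reuses the $x''$-compression trick of Theorem~\ref{th:red2BH} directly on $DBH(M)$, inserting the machine code into the encoding and obtaining the single inequality
\[
\nu_{|f(x)|}(f(x))\;\ge\;\frac{1}{16|x|^{2}g(|x|)s(|x|)2^{|\overline{M}|+1}}\,\mu_{|x|}(x),
\]
from which the CS+CM decomposition follows globally on $\Sigma^{\ast}$. Your idea of first invoking Theorem~\ref{th:red2BH} as a black box and then passing to $U$ by a simple prefixing map is more modular, and the observation that $\nu^{C(g_0)}$ is uniform on $C(g_0)$ is exactly right.

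There is, however, a genuine gap. Your maps $r_{2,1}$ and $r_{2,2}$ are specified only on $C(g_0)$, but Definitions~\ref{de:CS} and~\ref{de:CM-red} require total Ptime reductions on $\Sigma^{\ast}$ satisfying the correctness condition $x\in D_1\Leftrightarrow f(x)\in D_2$ everywhere. Off $C(g_0)$ two problems arise. First, for $u=1^{m_0}0w$ with $|u|<g_0(|w|)$ your map (and its obvious extension) is not a correct reduction: $M_0$ may halt on $w$ in more than $|u|$ steps, so $u\notin BH(M_0)^{+}$, yet $U$ still halts on $\overline{M_0}0w$ within the enlarged $h$-bound; property~(c) gives only an upper bound on $T_U$, not the lower bound you would need. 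Second, on spheres $\Sigma^{n}$ with $n$ outside the range of $g_0$ the induced measure $\nu^{C(g_0)}_{n}$ reverts to $\nu_{n}$ by definition, and against this non-uniform measure the CM domination with a \emph{polynomial} $d$ fails for any prefixing-type map---this is precisely the obstacle you diagnose in your last paragraph, now reappearing inside $r_2$. One clean repair is to feed $U$ the full instance $u$ rather than just $w$: encode a clocked machine $M_0'$ that on input $u=1^{m_0}0w$ simulates $M_0$ on $w$ for exactly $|u|$ steps, and set $r_2(u)=1^{m_h}0\,\overline{M_0'}\,0\,u$ at length $h(|\overline{M_0'}|+1+|u|)$; this is total, correct for every $u$, size-invariant in $|u|$, and the CM factor becomes the constant $2^{|\overline{M_0'}|+1}$ as you intended. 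With that change the rest of your argument goes through.
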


\begin{proof}
Let $M$ be an NTM over $\Sigma$  and $g$ a polynomial longevity guard for $M$.  Define (in the notation above) a function $f: \Sigma^\ast \to \Sigma^\ast$ by
$$f(x) = 1^{g(|x|)s(|x|)-|0 \ovn 0
    \overline{M} 0 x''|} 0 \ovn 0 \overline{M} 0 x'',$$
were $s$ is the polynomial from the description of the machine $U$ above. Put $h(n) = g(n)s(n)$. Clearly, $f$ gives a Ptime reduction $BH(M) \stackrel{f}{\to} BH(U,h)$. To show  that $f$ is an SGPtime reduction, one can argue as in the proof of Theorem \ref{th:red2BH}. To carry over the argument, one needs the following inequality for every $m\in\MN$ and $x\in\{0,1\}^\ast$:
$$    \nu_{|f(x)|}(f(x)) \ge \frac{1}{16 |x|^2 g(|x|)s(|x|) 2^{|\overline{M}|+1}} \cdot \mu_{|x|}(x)$$
which differs from the inequality (\ref{eq:factor_rh_reduction}) by a polynomial factor $2^{|\overline{M}|+1}s(|x|)$ in the denominator. The proof of this is similar to the one in Lemma \ref{le:measure_decrease} and we omit it.
\end{proof}

\begin{corollary}
There exists an NTM $U$ such that $DBH(U)$ is SGPtime complete.
\end{corollary}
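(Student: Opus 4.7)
The plan is to combine the three reduction theorems already established in this section into a single composite SGPtime reduction from an arbitrary $\DistNP$ problem to $DBH(U)$, where $U$ is the universal NTM described just before Theorem \ref{th:red2BHU}.

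First I would verify that $DBH(U) \in \DistNP$. Membership of $BH(U)$ in $\NP$ is immediate from the definition of $U$ as a (nondeterministic) Turing machine and the bounded halting formulation: on input $c(n,w)$ guess a computation of $U$ on $w$ of length at most $n$. The ensemble $\nu$ defined just before the corollary is easily seen to be Ptime computable, since on input $u$ one can first decide by scanning $u$ whether $u$ has the shape $1^m 0 w$, $1^k$, or $\varepsilon$, and in the first case output $\frac{1}{|u| 2^{|w|}}$, which yields a Ptime computable $\nu^\ast$ as well.

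Next, I would build the reduction chain for an arbitrary $(D,\mu) \in \DistNP$. By Theorem \ref{th:red2BH}, there exist an NTM $M$ over $\{0,1\}$, a polynomial longevity guard $g$ for $M$, and an SGPtime reduction
$$
(D,\mu) \stackrel{f_1}{\longrightarrow} DBH(M,g).
$$
Applying Proposition \ref{pr:restriction2BHMg} to $M$ and $g$ yields an SGPtime reduction
$$
DBH(M,g) \stackrel{id}{\longrightarrow} DBH(M).
$$
Then Theorem \ref{th:red2BHU} supplies a Ptime computable function $h$ and an SGPtime reduction
$$
DBH(M) \stackrel{f_2}{\longrightarrow} DBH(U,h).
$$
Finally, applying Proposition \ref{pr:restriction2BHMg} to $U$ and the polynomial $h$ produces an SGPtime reduction
$$
DBH(U,h) \stackrel{id}{\longrightarrow} DBH(U).
$$
Composing these four arrows and invoking transitivity of SGPtime reductions (the proposition on closure under composition in Section \ref{se:reductions}) gives an SGPtime reduction $(D,\mu) \to DBH(U)$, which, together with $DBH(U) \in \DistNP$, establishes SGPtime completeness.

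There is essentially no technical obstacle left: all the heavy lifting—constructing $M$, encoding the measure into halting inputs, passing through the universal simulator, and controlling the induced measure on restricted instances—was done in Theorem \ref{th:red2BH}, Theorem \ref{th:red2BHU}, and Proposition \ref{pr:restriction2BHMg}. The only mild care needed is to check that $h$ from Theorem \ref{th:red2BHU} satisfies the hypotheses of Proposition \ref{pr:restriction2BHMg}, namely that $h$ is polynomial and (after possibly replacing it by a larger polynomial majorant) satisfies conditions \textbf{(L1)}--\textbf{(L2)}; this is immediate since $h(n) = g(n) s(n)$ is a product of polynomials with the required monotonicity, and one may enlarge it by $n$ if necessary without affecting any of the arguments above.
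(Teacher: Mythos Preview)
Your proposal is correct and follows essentially the same chain of reductions as the paper's own proof: $(D,\mu)\to DBH(M,g)\to DBH(M)\to DBH(U,h)\to DBH(U)$ via Theorem~\ref{th:red2BH}, Proposition~\ref{pr:restriction2BHMg}, Theorem~\ref{th:red2BHU}, and Proposition~\ref{pr:restriction2BHMg} again. You add the explicit verification that $DBH(U)\in\DistNP$ and that $h$ is polynomial (so Proposition~\ref{pr:restriction2BHMg} applies), both of which the paper leaves implicit; these are harmless and arguably helpful clarifications.
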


\begin{proof}

By Theorem \ref{th:red2BH}   for any $\DistNP$ problem $(D,\mu)$ there
exists an NTM $M$ over a binary alphabet $\Sigma$, a polynomial longevity guard $g$ of $M$,  and an SGPtime reduction of $(D,\mu)$ to $DBH(M,g)$. By Proposition \ref{pr:restriction2BHMg},
there is an SGPtime reduction of $DBH(M,g)$ to $DBH(M)$. By Theorem \ref{th:red2BHU},
there exists a Ptime computable function $h$ and an SGPtime reduction
$DBH(M) \stackrel{f}{\to} DBH(U,h)$. Now, again by Proposition \ref{pr:restriction2BHMg}, there is an SGPtime reduction of $DBH(U,h)$ to $DBH(U)$. Hence, $(D,\mu)$ is SGPtime reducible to $DBH(U)$, as claimed.
\end{proof}


\section{Open problems}
\label{se:Problems}
In this section we discuss some open problems on generic complexity.

\begin{problem}
Is it true that every $\NP$-complete problem is generically in $\P$?
\end{problem}

In fact, even a much stronger version of the question above is still open:
\begin{problem}
Is it true that every $\NP$-complete problem is strongly generically in $\P$?
\end{problem}
Some of the well-known $\NP$-complete problems are in $\GP$, or in $\SGP$, see \cite{GMMU} for examples. However, there is no general approach to this problem at present.
If the answer to one of the questions above (in particular, the second one) is affirmative,
then it will imply that for all practical reasons $\NP$-complete problems are rather easy.
Otherwise, we will have an interesting partition of $\NP$-complete problems into several classes
with respect to their generic behavior.

It was shown in \cite{HM} that the halting problem for  one-end tape Turing machines is  in $\GP$. It remains to be seen if a similar result holds for Turing machines where the tape is infinite at both ends.

\begin{problem}
Is it true that the halting problem is in $\GP$ for Turing machines with one tape
that is infinite at both ends?
\end{problem}

It is known (see \cite{GMMU}) that the classes of functions
that are polynomial on average and generically polynomial are incompatible, i.e., none  of them is a subclass of the other.  Nonetheless, the relationship between $\SGP$-complete and $\NP$-complete on average is still unclear. To this end, the following problem is of interest.

\begin{problem}
Is it true that every $\NP$-complete on average problem is $\SGP$-complete?
\end{problem}


\begin{thebibliography}{10}

\bibitem{Arzhantseva:1998}
G.~{Arzhantseva}.
\newblock {Generic properties of finitely presented groups and Howson's
  theorem}.
\newblock {\em Comm. Algebra}, 26:3783--3792, 1998.

\bibitem{Bienvenu_Day_Holzl:2013}
L.~{Bienvenu}, {Day} A., and R.~{Holzl}.
\newblock {From bi-immunity to absolute undecidability}.
\newblock {\em J. Symbolic Logic}, 78:1218--1228, 2013.

\bibitem{BMR}
A.~V. {Borovik}, A.~G. {Myasnikov}, and V.~N. {Remeslennikov}.
\newblock Multiplicative measures on free groups.
\newblock {\em Int. J. Algebra Comput.}, 13:705--731, 2003.

\bibitem{Champetier:1995}
C.~{Champetier}.
\newblock {Propri\'et\'e statistiques des groupes de pr\'esentation finie}.
\newblock {\em Adv. in Math.}, 116:197--262, 1995.

\bibitem{Downey-Jockusch-McNicholl-Schupp:2014}
R.~{Downey}, C.~{Jockusch}, T.~{McNicholl}, and P.~{Schupp}.
\newblock {Asymptotic density and the Ershov hierarchy}.
\newblock To appear. Available at \url{http://arxiv.org/abs/1309.0137}, 2014.

\bibitem{Downey-Jockusch-Schupp:2013}
R.~{Downey}, C.~{Jockusch}, and P.~{Schupp}.
\newblock {Asymptotic density and computably enumerable sets}.
\newblock {\em J. Math. Log}, 13:43, 2013.

\bibitem{GMMU}
R.~{Gilman}, A.~G. {Myasnikov}, A.~D. {Miasnikov}, and A.~{Ushakov}.
\newblock {Generic complexity of algorithmic problems}.
\newblock In preparation.

\bibitem{GMMU:2007}
R.~{Gilman}, A.~G. {Myasnikov}, A.~D. {Miasnikov}, and A.~{Ushakov}.
\newblock {Report on generic case complexity}.
\newblock Preprint, available at \url{http://arxiv.org/abs/0707.1364}.

\bibitem{Gromov_hyperbolic}
M.~{Gromov}.
\newblock {Hyperbolic groups}.
\newblock In {\em Essays in group theory}, volume~8 of {\em MSRI Publications},
  pages 75--263. Springer, 1985.

\bibitem{Gr}
M.~{Gromov}.
\newblock {Asymptotic invariants of infinite groups}.
\newblock In {\em Geometric Group Theory II}, volume 182 of {\em LMS lecture
  notes}, pages 290--317. Cambridge Univ. Press, 1993.

\bibitem{Gromov:2003}
M.~{Gromov}.
\newblock {Random walks in random groups}.
\newblock {\em Geom. Funct. Analysis}, 13:73--146, 2003.

\bibitem{Gurevich:1991}
Y.~{Gurevich}.
\newblock {Average case completeness}.
\newblock {\em J. Comput. Syst. Sci.}, 42:346--398, 1991.

\bibitem{HM}
J.~D. {Hamkins} and A.~G. {Myasnikov}.
\newblock {The halting problem is decidable on a set of asymptotic probability
  one}.
\newblock {\em Notre Dame Journal of Formal Logic}, 47:515--524, 2006.

\bibitem{Igusa:2013}
G.~{Igusa}.
\newblock {Nonexistence of minimal pairs for generic computability}.
\newblock {\em J. Symbolic Logic}, 78(2):511--522, 2013.

\bibitem{Impagliazzio95}
R.~{Impagliazzo}.
\newblock {A personal view of average-case complexity}.
\newblock In {\em Proceedings of the 10th Annual Structure in Complexity Theory
  Conference (SCT'95)}, pages 134--147, 1995.

\bibitem{Jockusch-Schupp:2012}
C.~{Jockusch} and P.~{Schupp}.
\newblock {Generic computability, Turing degrees, and asymptotic density}.
\newblock {\em J. Lond. Math. Soc.}, 85(2):472--490, 2012.

\bibitem{KMSS1}
I.~{Kapovich}, A.~G. {Miasnikov}, P.~{Schupp}, and V.~{Shpilrain}.
\newblock {Generic-case complexity, decision problems in group theory and
  random walks}.
\newblock {\em J. Algebra}, 264:665--694, 2003.

\bibitem{KMSS2}
I.~{Kapovich}, A.~{Myasnikov}, P.~{Schupp}, and V.~{Shpilrain}.
\newblock Average-case complexity and decision problems in group theory.
\newblock {\em Adv. Math.}, 190:343--359, 2005.

\bibitem{Levin:1986}
L.~{Levin}.
\newblock {Average case complete problems}.
\newblock {\em SIAM J. Comput.}, 15:285--286, 1986.

\bibitem{MRyb}
A.~G. {Miasnikov} and A.~{Rybalov}.
\newblock {On generically undecidable problems}.
\newblock in preparation.

\bibitem{MSU_book:2011}
A.~G. {Miasnikov}, V.~{Shpilrain}, and A.~{Ushakov}.
\newblock {\em Non-Commutative Cryptography and Complexity of Group-Theoretic
  Problems}.
\newblock Mathematical Surveys and Monographs. AMS, 2011.

\bibitem{NW}
T.~{Neary} and D.~{Woods}.
\newblock {Small fast universal Turing machines}.
\newblock technical report NUIM-CS-TR-200511, National university of Ireland,
  Maynooth, 2005.

\bibitem{Ol}
A.~Yu. {Ol'shanskii}.
\newblock Almost every group is hyperbolic.
\newblock {\em Int. J. Alg. Comput.}, 2:1--17, 1992.

\bibitem{papa}
C.~{Papadimitriou}.
\newblock {\em Computational Complexity}.
\newblock Addison-Wesley, 1994.

\bibitem{Wang:1995}
J.~Wang.
\newblock Average-case completeness of a word problem for groups.
\newblock In {\em Proceedings of the twenty-seventh annual ACM symposium on
  Theory of computing}, STOC '95, pages 325--334. ACM, 1995.

\end{thebibliography}
\end{document}